\documentclass[sigconf]{acmart}

\pdfpagewidth=8.5in
\pdfpageheight=11in

\fancyhead{}
\settopmatter{printacmref=true, printfolios=false}

\settopmatter{printacmref=false} % Removes citation information below abstract
\renewcommand\footnotetextcopyrightpermission[1]{} % removes footnote with conference information in first column
\pagestyle{plain} % removes running headers

% Packages
\usepackage{graphicx}
\usepackage{subcaption}
\usepackage{balance}  % for  \balance command ON LAST PAGE  (only there!)
\usepackage{amsmath,epsfig}
\usepackage{amssymb,amsthm,bbm}
\usepackage[noend]{algpseudocode}
\usepackage{algorithm}
\usepackage{epstopdf}
\usepackage{multi row}     % multi rows for tables
\usepackage[font={bf}, tableposition=bottom]{caption}     % captions on top for tables
\usepackage{microtype}    % compress text
\usepackage{units}     % nicer slanted fractions
\usepackage{xspace}     % fix space in macros
\usepackage{booktabs}     % nicer tables
\usepackage{siunitx}          % \num for decimal grouping}
\usepackage{xcolor}
\usepackage{tablefootnote}
\usepackage{float}
\usepackage{xfrac}
\usepackage{balance}  % for  \balance command ON LAST PAGE  (only there!)
\usepackage{type1cm}     % type1 computer modern font
\usepackage{todonotes}

\pagenumbering{gobble}

%\copyrightyear{2018}
%\acmYear{2018}
%\setcopyright{acmlicensed}
%\acmConference{CIKM'18 }{October 22--26, 2018}{Turin, Italy}\acmPrice{15.00}\acmDOI{12.3456/1234567.1234567}
%\acmISBN{111-1-1111-1111-1/11/11}

% !TEX root =  ../main.tex

% Paragraphs
\newcommand{\spara}[1]{\smallskip\noindent{\bf #1}}

\newtheorem{theorem}{Theorem}[section]

\newtheorem{lemma}[theorem]{Lemma}

\newtheorem{problem}[theorem]{Problem}

\newtheorem{claim}{Claim}[section]

%% MACROS
\newcommand{\graph}{\ensuremath{G}\xspace}

\newcommand{\grapht}[1]{\ensuremath{\graph^{#1}}\xspace}
\newcommand{\vertices}{\ensuremath{V}}

\newcommand{\edges}{\ensuremath{E}\xspace}
\newcommand{\edge}{\ensuremath{e}\xspace}
\newcommand{\labels}{\ensuremath{L}\xspace}

\newcommand{\verticest}[1]{\ensuremath{\vertices^{#1}}\xspace}
\newcommand{\operation}{\textbf{o}\xspace}
\newcommand{\edgest}[1]{\ensuremath{\edges^{#1}}\xspace}

\newcommand{\tuple}[1]{\ensuremath{\langle {#1}\rangle}\xspace}
\newcommand{\subgraph}{\ensuremath{\graph_S}\xspace}
\newcommand{\subgraphone}{\ensuremath{\graph_S}\xspace}
\newcommand{\subgraphtwo}{\ensuremath{\graph_T}\xspace}
\newcommand{\subsetv}{\ensuremath{S}\xspace}
\newcommand{\subsetvone}{\ensuremath{S}\xspace}
\newcommand{\subsetvtwo}{\ensuremath{T}\xspace}

\newcommand{\subsete}[1]{\ensuremath{E(#1)}\xspace}

\newcommand{\isomorphismf}{\ensuremath{I}\xspace}
\newcommand{\numnodes}{\ensuremath{n}\xspace}
\newcommand{\numedges}{\ensuremath{m}\xspace}

\newcommand{\neighborst}[2]{\ensuremath{N_{#1}^{#2}}\xspace}
\newcommand{\degree}[1]{\ensuremath{d_{#1}}\xspace}
\newcommand{\degreet}[2]{\degree{#1}^{#2}\xspace}

\newcommand{\sample}{\ensuremath{\mathcal{S}}\xspace}

\newcommand{\bigO}{\ensuremath{\mathcal{O}}\xspace}

\newcommand{\indsubgraphs}[1]{\ensuremath{\mathcal{C}^{#1}}\xspace}
\newcommand{\indsubgraphsi}[2]{\ensuremath{\mathcal{C}^{#1}_{#2}}\xspace}
\newcommand{\numclasses}[1]{\ensuremath{T_{#1}}\xspace}
\newcommand{\freq}{\ensuremath{f}\xspace}
\newcommand{\freqindsubgraphs}[2]{\ensuremath{\mathcal{F}^{#1}_{#2}}\xspace}
\newcommand{\relfreq}[1]{\ensuremath{{p}_{#1}}\xspace}
\newcommand{\subgraphsample}{\ensuremath{\mathcal{S}}\xspace}
\newcommand{\approxindsubgraphs}[4]{\ensuremath{\mathcal{\tilde{F}}^{#1}_{#2}(#3,#4)}\xspace}
\newcommand{\idx}{\ensuremath{\mathcal{I}}\xspace}

%algorithm symbols
\newcommand{\ec}{\ensuremath{\mathbf{EC}}\xspace} 
\newcommand{\er}{\ensuremath{\mathbf{ER}}\xspace} 
\newcommand{\sr}{\ensuremath{\mathbf{SR}}\xspace} 
\newcommand{\osr}{\ensuremath{\mathbf{OSR}}\xspace}

\newcommand{\nodelabelspace}{\ensuremath{L}\xspace}
\newcommand{\edgelabelspace}{\ensuremath{Q}\xspace}
\newcommand{\nodelabel}[1]{\ensuremath{\ell_{#1}}\xspace}
\newcommand{\edgelabel}[1]{\ensuremath{q_{#1}}\xspace}

\newcommand{\threshold}{\ensuremath{\tau}\xspace}

\algdef{SE}[DOWHILE]{Do}{doWhile}{\algorithmicdo}[1]{\algorithmicwhile\ #1}%

% Squishlist
\newenvironment {squishlist}
{\begin{list}{$\bullet$}
  { \setlength{\itemsep}{1pt}
     \setlength{\parsep}{1pt}
     \setlength{\topsep}{1pt}
     \setlength{\partopsep}{1pt}
     \setlength{\leftmargin}{1.5em}
     \setlength{\labelwidth}{1em}
     \setlength{\labelsep}{0.5em} } }
{\end{list}}

\usepackage{environ}
\NewEnviron{myequation}{%
    \begin{equation}
    \scalebox{1}{$\BODY$}
    \end{equation}
    }

\errorcontextlines\maxdimen

% begin vertical rule patch for algorithmicx (http://tex.stackexchange.com/questions/144840/vertical-loop-block-lines-in-algorithmicx-with-noend-option)
\makeatletter
% start with some helper code
% This is the vertical rule that is inserted
    \newcommand*{\algrule}[1][\algorithmicindent]{\makebox[#1][l]{\hspace*{.5em}\thealgruleextra\vrule height \thealgruleheight depth \thealgruledepth}}%
% its height and depth need to be adjustable
\newcommand*{\thealgruleextra}{}
\newcommand*{\thealgruleheight}{.75\baselineskip}
\newcommand*{\thealgruledepth}{.25\baselineskip}

\newcount\ALG@printindent@tempcnta
\def\ALG@printindent{%
    \ifnum \theALG@nested>0% is there anything to print
        \ifx\ALG@text\ALG@x@notext% is this an end group without any text?
            % do nothing
        \else
            \unskip
            \addvspace{-1pt}% FUDGE to make the rules line up
            % draw a rule for each indent level
            \ALG@printindent@tempcnta=1
            \loop
                \algrule[\csname ALG@ind@\the\ALG@printindent@tempcnta\endcsname]%
                \advance \ALG@printindent@tempcnta 1
            \ifnum \ALG@printindent@tempcnta<\numexpr\theALG@nested+1\relax% can't do <=, so add one to RHS and use < instead
            \repeat
        \fi
    \fi
    }%
\usepackage{etoolbox}
% the following line injects our new indent handling code in place of the default spacing
\patchcmd{\ALG@doentity}{\noindent\hskip\ALG@tlm}{\ALG@printindent}{}{\errmessage{failed to patch}}
\makeatother

% the required height and depth are set by measuring the content to be shown
% this means that the content is processed twice
\newbox\statebox
\newcommand{\myState}[1]{%
    \setbox\statebox=\vbox{#1}%
    \edef\thealgruleheight{\dimexpr \the\ht\statebox+1pt\relax}%
    \edef\thealgruledepth{\dimexpr \the\dp\statebox+1pt\relax}%
    \ifdim\thealgruleheight<.75\baselineskip
        \def\thealgruleheight{\dimexpr .75\baselineskip+1pt\relax}%
    \fi
    \ifdim\thealgruledepth<.25\baselineskip
        \def\thealgruledepth{\dimexpr .25\baselineskip+1pt\relax}%
    \fi
    %\showboxdepth=100
    %\showboxbreadth=100
    %\showbox\statebox
    \State #1%
    %\State \usebox\statebox
    %\State \unvbox\statebox
    %reset in case the next command is not wrapped in \myState
    \def\thealgruleheight{\dimexpr .75\baselineskip+1pt\relax}%
    \def\thealgruledepth{\dimexpr .25\baselineskip+1pt\relax}%
}
% end vertical rule patch for algorithmicx

\usepackage{tikz}

\DeclareGraphicsExtensions{.pdf,.png,.jpg,.eps}
\graphicspath{{./img/}}

\begin{document}

% ****************** TITLE ****************************************
\title{Mining Frequent Patterns in Evolving Graphs}

\author{Cigdem Aslay}
\affiliation{\institution{Aalto University}}
\email{cigdem.aslay@aalto.fi}

\author{Muhammad Anis Uddin Nasir}
\affiliation{\institution{King Digital Entertainment Ltd}}
\email{anis.nasir@king.se}

\author{Gianmarco De Francisci Morales}
\affiliation{\institution{ISI Foundation}}
\email{gdfm@acm.org}

\author{Aristides Gionis}
\affiliation{\institution{Aalto University}}
\email{aristides.gionis@aalto.fi}

%\author{
%Cigdem Aslay{\small $^{\sharp}$},
%Muhammad Anis Uddin Nasir{\small $^{\star}$}, \\
%Gianmarco De Francisci Morales{\small $^{\diamond}$}, 
%Aristides Gionis{\small $^{\sharp}$}\\
%}
%\affiliation{%
%  \institution{
%$^{\sharp}$Aalto University  \hspace{1mm}
%$^{\star}$King Digital Entertainment Ltd \hspace{1mm}
%$^{\diamond}$ISI Foundation \hspace{1mm}
%%$^{\sharp}$Aalto University \hspace{1mm}
%  }
%  {\small
%  cigdem.aslay@aalto.fi,
%  anis.nasir@king.se,
%  gdfm@acm.org,
%  aristides.gionis@aalto.fi
%  }
%}
\thanks{Part of the work was done while the first author was at ISI Foundation, the second author was at KTH, and the third author was at QCRI.}

\copyrightyear{2018} 
\acmYear{2018} 
\setcopyright{none}
\acmConference[CIKM '18]{The 27th ACM International Conference on Information and Knowledge Management}{October 22--26, 2018}{Torino, Italy}
\acmBooktitle{The 27th ACM International Conference on Information and Knowledge Management (CIKM '18), October 22--26, 2018, Torino, Italy}
\acmPrice{15.00}
\acmDOI{10.1145/3269206.3271772}
\acmISBN{978-1-4503-6014-2/18/10}

\begin{abstract}
Given a labeled graph, 
the frequent-subgraph mining (FSM) problem asks to find all the $k$-vertex subgraphs 
that appear with frequency greater than a given threshold. 
FSM has numerous applications ranging from biology to network science, 
as it provides a compact summary of the characteristics of the graph.
However, the task is challenging, 
even more so for evolving graphs due to the streaming nature of the input and 
the exponential time complexity of the problem.

In this paper, we initiate the study of the approximate FSM problem 
in both incremental and fully-dynamic streaming settings, 
where arbitrary edges can be added or removed from the graph. 
For each streaming setting, we propose algorithms that can extract 
a high-quality approximation of the frequent $k$-vertex subgraphs 
for a given threshold, at any given time instance, with high probability. 
In contrast to the existing state-of-the-art solutions that require iterating 
over the entire set of subgraphs for any update, 
our algorithms operate by maintaining a uniform sample of $k$-vertex subgraphs 
% at any time instance 
with optimized neighborhood-exploration procedures local to the updates. 
We provide theoretical analysis of the proposed algorithms and 
empirically demonstrate that the proposed algorithms generate high-quality results compared to baselines.
\end{abstract}

\maketitle
% !TEX root = ../main.tex
\section{Introduction}
\label{sec:intro}

\enlargethispage{\baselineskip}

% FSM definition and relevance
Frequent-subgraph mining (FSM) is a fundamental graph-mining task with applications in various disciplines, including bioinformatics, security, and social sciences.
The goal of FSM is to find subgraph patterns of interest that are frequent in a given graph.
Such subgraphs might be indicative of an important protein interaction, a possible intrusion, or a common social norm.
FSM also finds applications in graph classification and indexing.

% Challenges
Existing algorithms for subgraph mining are not scalable to large graphs that arise, for instance, in social domains.
In addition, these graphs are usually produced as a result of a dynamic process, hence are subject to continuous changes.
For example, in social networks new edges are added as a result of the interactions of their users, and the graph structure is in continuous flux.
Whenever the graph changes, i.e., by adding or removing an edge, a large number of new subgraphs can be created, and existing subgraphs can be modified or destroyed.
Keeping track of all the possible changes in the graph is subject to combinatorial explosion, thus, is highly challenging.

% Existing attempts
In this paper we address the problem of mining frequent subgraphs in an evolving graph, 
which is represented as a stream of edge updates --- additions or deletions.
Only a few existing works consider a similar setting~\citep{bifet2011mining,ray2014frequent,abdelhamid2017incremental}.
\citet{bifet2011mining} deal with a transactional setting where the input is a stream of small graphs.
% rather than a single graph that evolves over time.
Their setting is similar to the one considered by frequent-itemset mining, so many of the existing results can be reused.
Conversely, in our case, there is a single graph that is continuously evolving.
\citet{ray2014frequent} consider a scenario similar to the one we study in this paper.
They consider a single graph with continuous updates, although they only allow incremental ones (edge addition) rather than the fully-dynamic ones we consider (edge addition and deletion).
Moreover, their approach is a simple heuristic that does not provide any correctness guarantee.
Our approach, instead, is able to provably find the frequent subgraphs in a fully dynamic graph stream.
\citet{abdelhamid2017incremental} tackle a problem setting similar to ours, with a single fully-dynamic evolving graph.
They propose an exact algorithm which tracks patterns which are at the ``fringe'' of the frequency threshold, and borrows heavily from existing literature on incremental pattern mining.
As such, they need to use a specialized notion of frequency for graphs (minimum image support).
Instead, our algorithm provides an approximate solution which uses the standard notion of induced subgraph isomorphism for frequency.

% Our contribution
This paper is the first to propose an approximation algorithm for the frequent-subgraph mining problem on a fully-dynamic evolving graph.
We propose a principled sampling scheme for subgraphs and provide theoretical justifications for its accuracy.
Differently from previous work on sampling from graph streams, 
our method relies on sampling \emph{subgraphs} rather than edges.
This choice enables sampling any kind of subgraph of the same size with equal probability, and thus simplifies dramatically the design of the frequency estimators.
We maintain a uniform sample of subgraphs via reservoir sampling, 
which in turn allows us to estimate the frequency of different patterns.
To handle deletions in the stream, we employ an adapted version of \emph{random pairing}~\cite{gemulla2006dip}.
Finally, to increase the efficiency of our sampling procedure during the exploration of the local neighborhood of updated edges, we employ an adaptation of the ``skip optimization,'' proposed by~\citet{vitter1985random} for reservoir sampling and by~\citet{gemulla2008maintaining} for random pairing. %Finally, to make our sampling procedure more efficient, we adapt \citeauthor{vitter1985random}'s idea of ``skip optimization'' for reservoir sampling~\cite{vitter1985random} to the exploration of the neighborhood of the edge update. %exploration of the neighborhood local to the edge updates

\enlargethispage{\baselineskip}

% Our results
Concretely, our main contributions are the following:
\begin{squishlist}
\item We are the first to propose an approximation algorithm for the frequent-subgraph mining problem for evolving graph.
\item We propose a new subgraph-based sampling scheme.
\item We show how to use random pairing to handle deletions.
\item We describe how to implement neighborhood exploration efficiently via ``skip optimization.''
\item We provide theoretical analysis and guarantees on the accuracy of the algorithm.
\end{squishlist}

% !TEX root =  ../main.tex
%% PITCH
\section{Problem definition}
\label{sec:prel}

We consider graphs with vertex and edge \emph{labels}.
We model dynamic graphs as a sequence of \emph{edge} \emph{additions} and \emph{deletions}. 

We assume monitoring a graph that changes over time. 
For any time $t\geq 0$, we let $\grapht{t}=(\verticest{t}, \edgest{t})$ 
be the graph that has been observed up to and including time $t$, 
where $\verticest{t}$ represents the set of vertices and $\edgest{t}$ represents the set of edges.
We assume that vertices and edges have labels, 
and we write $\nodelabelspace$ and $\edgelabelspace$ 
for the sets of labels of vertices and edges, respectively.
For each vertex $v\in\verticest{t}$ we denote its label by $\nodelabel{v} \in \nodelabelspace$, 
and similarly, 
for each edge $e=(u,v)\in\edgest{t}$ we denote its label by $\edgelabel{e} \in \edgelabelspace$. 
% Each vertex in $\verticest{t}$ is a key-value pair $\tuple{{v,\nodelabel{v}}}$, where $v$ is the vertex and $\nodelabel{v} \in \nodelabelspace$ is the label of the vertex $v$. Similarly, each edge in $\edgest{t}$ is of the form $\tuple{{u,v, \edgelabel{e}}}$, where $u$ and $v$ are two vertices and $\edgelabel{e} \in \edgelabelspace$ is the edge label.
Initially, at time $t=0$, we have $\verticest{0}=\edgest{0}=\varnothing$.
For any $t \ge 0$, at time $t+1$ we receive an update tuple 
$\tuple{\operation, \edge, \edgelabel{}}$ from a stream, 
where 
$\operation \in \{+,-\}$ represents an update operation, addition or deletion, 
$\edge=(u,v)$ is a pair of vertices, and
$\edgelabel{}\in\edgelabelspace$ is an edge label.
The graph $\grapht{t+1}=(\verticest{t+1}, \edgest{t+1})$ is obtained by adding a new edge 
or deleting an existing edge as follows:
\[
\edgest{t+1} =
  \begin{cases}
    \edgest{t} \cup \{\edge\} \mbox{ and } \edgelabel{\edge}=\edgelabel{} 
     & \quad \text{if } \operation = + \\
    \edgest{t} \setminus \{\edge\} & \quad \text{if } \operation = -  \quad.
  \end{cases}
\]
Additions and deletions of vertices are treated similarly.
Furthermore, we assume that when adding an edge $\edge=(u,v)$, 
the vertices $u$ and $v$ are added in the graph too, 
if they are not present at time~$t$.
Similarly, when deleting a vertex, 
we assume that all incident edges are deleted too, prior to the vertex deletion. 
Our model deals with the fully dynamic stream of edges, which is different from the stream of graphs \cite{wackersreuther2010frequent}.
For simplicity of exposition, in the rest of the paper 
we discuss only edge additions and deletions ---
vertex operations can be handled rather easily.

%Consider an undirected labeled graph $\dgraph{\vertices}{\edges}{\labels}{\labelf}$.
%$\vertices$ are the set of vertices, $\edges$ are the set of edges, $\labels$ are the set of labels.
%Further, $\labelf$ is a labeling function that maps the set of vertices and edges to set of labels, i.e., $\labelf \colon 
%\vertices \cup \edges \to \labels$.
%Each edge $\edge \in \edges$ can be denoted by a pair of vertices $(\vertex_i,\vertex_j)$ where $\vertex_i,\vertex_j \in \vertices$.
We use $\numnodes_t=\vert \verticest{t} \vert$ and $\numedges_t=\vert \edgest{t} \vert$ 
to refer to the number of vertices and edges, respectively, at time $t$.
%A graph without a self-loop or multi-edge is a simple graph.
In this work, we considered simple, connected, and undirected graphs.
The \emph{neighborhood} of a vertex $u\in \verticest{t}$ at time~$t$ is defined as 
$\neighborst{u}{t} =\{v \mid (u,v) \in \edgest{t}\}$, 
and its \emph{degree} as $\degreet{u}{t} =| \neighborst{u}{t} |$.
Similarly, the $h$-hop neighborhood of $u$ at time $t$ is denoted as $\neighborst{u,h}{t}$, and 
indicates the set of the vertices that can be reached from $u$ in $h$ steps by following the edges \edgest{t}.
To simplify the notation, we omit to specify the dependency on $t$ when it is obvious from the context.

For any graph $\graph=(\vertices,\edges)$ and a subset of vertices $\subsetv \subseteq \vertices$, 
we say that $\subgraph=(\subsetv,\subsete{\subsetv})$ is an \emph{induced} subgraph of $G$ if 
%$\subsetv \subseteq V$, $\subsete{\subsetv} \subseteq E$ and $\{e=(u,v) : u,v \in \subsetv, e \in E, e \not\in \subsete{\subsetv}\} = \varnothing$.
for all pairs of vertices $u,v \in \subsetv$ 
it is $(u,v) \in \subsete{\subsetv}$ if and only if $(u,v) \in \edges$.
We define $\indsubgraphs{k}$ to be the set of all induced subgraphs with $k$ vertices in~$\graph$. 
All subgraphs considered in this paper are induced subgraphs, 
unless stated otherwise.

% \todo[inline]{Shall we clearly state that we are working only with induced subgraphs in this paper hence we will simply refer to ``induced subgraph" as  ``subgraph" and explicitly state otherwise if necessary? Otherwise it takes a lot of space to write induced each time. --Cigdem}

%We say that two subgraphs of $G$, $\subgraph \in \indsubgraphs{k}$ and $\ensuremath{G_{s'}} \in C^k$ are isomorphic, denoted by $\subgraph \cong \ensuremath{\subgraphtwo}$, if there exists a bijection $\isomorphismf : \subsetv \mapsto \ensuremath{S'}$ such that $(u,v) \in E(S)$ implies $(\isomorphismf(u),\isomorphismf(v)) \in E(S')$ and $\isomorphismf$ preserves the vertex and edge labels, i.e., $L_{S}(u) = L_{S'}(\isomorphismf(u))$ and $L_{S}(u,v) = L_{S'}(\isomorphismf(u), \isomorphismf(v))$, $\forall (u,v) \in E(S)$. We partition $\indsubgraphs{k}$ into $\numclasses{k}$ equivalence classes $\indsubgraphsi{k}{i}, \cdots, \indsubgraphsi{k}{T_{k}}$, where subgraphs within $\indsubgraphsi{k}{i}$ are isomorphic and any pair of subgraphs from $\indsubgraphsi{k}{i}$ and $\indsubgraphsi{k}{j}$, $i\neq j$ are not isomorphic. 

We say that two subgraphs of $\graph$, 
denoted by $\subgraphone \in \indsubgraphs{k}$ and $\subgraphtwo \in \indsubgraphs{k}$ 
are isomorphic
if there exists a bijection $\isomorphismf : \subsetvone \rightarrow \subsetvtwo$ 
such that $(u,v) \in \edges(\subsetvone)$ if and only if $(\isomorphismf(u),\isomorphismf(v)) \in E(\subsetvtwo)$
and the mapping $\isomorphismf$ preserves the vertex and edge labels, 
i.e., $\nodelabel{u} = \nodelabel{\isomorphismf(u)}$ and 
$\edgelabel{(u,v)} = \edgelabel{(\isomorphismf(u), \isomorphismf(v))}$, 
for all $u\in\subsetvone$ and for all $(u,v) \in \edges(\subsetvone)$.
We write $\subgraphone \simeq \subgraphtwo$ to denote that $\subgraphone$ and $\subgraphtwo$ are isomorphic.

The isomorphism relation $\simeq$ partitions the set of subgraphs $\indsubgraphs{k}$
into $\numclasses{k}$ equivalence classes,\footnote{Notice that the value of $\numclasses{k}$ is simply determined by $k$, $\lvert L \rvert$, and $\lvert Q \rvert$.} 
denoted by $\indsubgraphsi{k}{1}, \cdots, \indsubgraphsi{k}{T_{k}}$.
% We partition $\indsubgraphs{k}$ into $\numclasses{k}$ equivalence classes 
% $\indsubgraphsi{k}{1}, \cdots, \indsubgraphsi{k}{T_{k}}$, 
% where subgraphs within $\indsubgraphsi{k}{i}$ are isomorphic, 
% and any pair of subgraphs from $\indsubgraphsi{k}{i}$ and $\indsubgraphsi{k}{j}$, 
% with $i\neq j$ are not isomorphic.
Each equivalence class $\indsubgraphsi{k}{i}$ is called a \emph{subgraph pattern}.

We define the support set $\sigma(\subgraph)$ of any $k$-vertex subgraph 
$\subgraph \in \indsubgraphs{k}$ as the number of $k$-vertex subgraphs of $G$ 
that are isomorphic to $\subgraph$, 
i.e., $\sigma(\subgraph) = \vert\indsubgraphsi{k}{i}\vert$, where $\subgraph \in \indsubgraphsi{k}{i}$. 
We then define the frequency $\freq(\subgraph)$ of a subgraph \subgraph 
as the fraction of $k$-vertex subgraphs of $\graph$ that are isomorphic to $\subgraph$, 
i.e., $\freq(\subgraph) = \sigma(\subgraph)/\vert \indsubgraphs{k} \vert$. 

Next we define the problem of mining frequent $k$-vertex subgraphs. 
Given a graph $\graph=(\vertices,\edges,\nodelabelspace, \edgelabelspace)$ 
and a frequency threshold $\tau \in (0,1]$, 
the set $\freqindsubgraphs{k}{\tau} \subseteq \indsubgraphs{k}$ 
of frequent $k$-vertex subgraphs of $\graph$ with respect to $\tau$ 
is the collection of all $k$-vertex subgraphs with frequency at least $\tau$, 
that is 
\[
\freqindsubgraphs{k}{\tau} = \{\subgraph \mid \subgraph \in \indsubgraphs{k} 
\mbox{ and } \freq(\subgraph) \ge \tau\}.
\]

\begin{problem}\label{pr:exactFrequent}
Given a graph $\graph=(\vertices,\edges,\nodelabelspace, \edgelabelspace)$, an integer $k$, 
and a frequency threshold~$\tau$, 
find the collection $\freqindsubgraphs{k}{\tau}$
% = \{\subgraph: \subgraph \in \indsubgraphs{k}, \freq(\subgraph) \ge \tau\}$ 
of frequent $k$-vertex subgraphs of $\graph$.
\end{problem}

Let $\relfreq{i} = \vert \indsubgraphsi{k}{i} \vert / \vert \indsubgraphs{k} \vert$ 
denote the frequency of isomorphism class $i$, with $i=1,\ldots,\numclasses{k}$. 
The problem of finding the frequent $k$-vertex subgraphs 
% for a given threshold $\tau$ 
requires finding all isomorphism classes 
$\indsubgraphsi{k}{i}$ 
% with $i=1,\ldots,\numclasses{k}$, such that 
with $\relfreq{i} \ge \tau$. 
Hence, we equivalently have
\[
\freqindsubgraphs{k}{\tau} = \bigcup_{i \in [1,\numclasses{k}]} 
\{\subgraph \mid \subgraph \in \indsubgraphsi{k}{i} \mbox{ and } \relfreq{i} \ge \tau\}.
\]

In this paper, our aim is to find an approximation to the collection $\freqindsubgraphs{k}{\tau}$ 
by efficiently estimating $\relfreq{i}$, % for all $i=1,\ldots,\numclasses{k}$, 
from a uniform sample $\subgraphsample$ of $\indsubgraphs{k}$. 
We say that a subset $\subgraphsample \subseteq \indsubgraphs{k}$, 
with $\vert \subgraphsample \vert = M$, 
is a uniform sample of size $M$ from $\indsubgraphs{k}$ 
if the probability of sampling $\subgraphsample$ 
is equal to the probability of sampling any $\subgraphsample' \subseteq \indsubgraphs{k}$ 
with $\vert \subgraphsample' \vert = M$, i.e., 
all samples of the same size are equally likely to be produced. 

Formally, we want to find an $(\epsilon, \delta)$-approximation to $\freqindsubgraphs{k}{\tau}$, 
denoted by $\approxindsubgraphs{k}{\tau}{\epsilon}{\delta}$ such that 
\[
\approxindsubgraphs{k}{\tau}{\epsilon}{\delta} = 
\bigcup_{i \in [1,\numclasses{k}]} 
\{\subgraph \mid \subgraph \in \indsubgraphsi{k}{i} \cap \subgraphsample, 
|\hat{\relfreq{i}}  - \relfreq{i}| \le \epsilon / 2, 
\relfreq{i} \ge \tau \},
\]
where $\hat{p}_i$ is the estimation of $\relfreq{i}$ 
such that % for each $i=1,\ldots,\numclasses{k}$, it is 
$|\hat{\relfreq{i}}  - \relfreq{i}| \le \epsilon / 2$ holds with probability at least $1 - \delta$.
In practice, the collection $\approxindsubgraphs{k}{\tau}{\epsilon}{\delta}$
of approximate frequent patterns
is computed from a sample $\subgraphsample \subseteq \indsubgraphs{k}$. 

The problem of approximate frequent subgraph mining 
can now be formulated as follows.

\begin{problem}\label{pr:approxFrequent}
Given a graph $\graph=(\vertices,\edges,\nodelabelspace, \edgelabelspace)$, a frequency threshold~$\tau$,
a small integer $k$, and constants $0< \epsilon, \delta < 1$, 
find the collection $\approxindsubgraphs{k}{\tau}{\epsilon}{\delta}$ 
that is an $(\epsilon, \delta)$-approximation to $\freqindsubgraphs{k}{\tau}$. 
\end{problem}

We focus on the dynamic case with vertex and edge additions and insertions.
As discussed above, at each time $t$ we consider the $\grapht{t}=(\verticest{t}, \edgest{t})$ that results from all vertex and edge operations.
Our goal is to maintain the approximate collection of frequent subgraphs $\approxindsubgraphs{k}{\tau}{\epsilon}{\delta}$ at each time $t$ without having to recompute it from scratch after each addition or deletion.

In the following problem definition we assume that vertex/edge labels are specified when a vertex/edge is added in the graph stream and they do not change afterwards. 
We make this assumption without loss of generality, as a vertex/edge label change can be simulated by a vertex/edge deletion followed by an addition of the same vertex/edge with different label.

\begin{problem}\label{pr:approxFrequent}
Given an evolving graph $\grapht{t}=(\verticest{t}, \edgest{t},\nodelabelspace, \edgelabelspace)$, 
a frequency threshold~$\tau$, a small integer $k$, and 
constants $0< \epsilon, \delta < 1$, 
maintain an approximate collection of frequent subgraphs 
$\approxindsubgraphs{k}{\tau}{\epsilon}{\delta}$ 
at each time~$t$. 
\end{problem}

\section{Algorithms}
\label{sec:algo}

This section describes the proposed algorithms, which are based on subgraph sampling.
We present two algorithms, both of which are based on two components: a reservoir of samples and an exploration procedure.
The goal of the reservoir is to capture the changes to already sampled connected $k$-subgraphs.\footnote{Hereafter, we simply refer to a $k$-vertex induced subgraph as $k$-subgraph.} The goal of the exploration procedure is to include newly (dis)connected $k$-subgraphs into the sample.
This separation of concerns allows the algorithm to minimize the amount of work per sample,
e.g., by avoiding computation of expensive minimum DFS codes for the corresponding patterns~\citep{yan2002gspan}.

The base algorithm requires to enumerate, at each time $t$, every newly (dis)connected $k$-subgraph at least once, by performing a \emph{neighborhood exploration} of the updated edge.
We show how to improve this algorithm by avoiding to materialize all the subgraphs via a \emph{skip optimization}.
This optimization enables picking subgraphs into the sample without having to list them all.
We also propose an additional heuristic to speed up the neighborhood exploration.
We provide an efficient implementation for the case $k = 3$, and describe how it generalizes to values $k > 3$ (although not as efficiently).

\subsection{Incremental streams}

We begin by describing our algorithm for maintaining a uniform sample 
\sample of fixed-size $M$ of $k$-subgraphs of $\grapht{t}$ for incremental streams (only edge addition).
The algorithm relies on \emph{reservoir sampling} \cite{vitter1985random} 
to ensure the uniformity of the sample \sample.

The addition of an edge $(u,v) \not\in \edgest{t-1}$ at time $t$ affects  only the subgraphs in the local neighborhoods up to $\neighborst{u,h}{t-1}$ and $\neighborst{v,j}{t-1}$, 
where $h+j=k-2$, i.e., all the connected $k$-subgraphs that contain $u$, $v$, 
and $h+j$ additional nodes from their neighborhoods, for all admissible values of $h,j \ge 0$.
Therefore, a uniform sample $\sample$ of subgraphs can be maintained by iterating through the subgraphs in the neighborhood of the newly inserted edge.
In particular, consider the addition of an edge $(u,v)$ at time $t$. 
Let $H \subseteq \neighborst{u,h}{t-1} \cup \neighborst{v,j}{t-1}$
be a subset of vertices, for some $h$ and $j$, such that 
$h+j=k-2$, 
$\{u,v\} \in H$, 
$|H| = k$.
There are two possible cases: ($i$) if $H$ is connected in \grapht{t-1}, a modified subgraph $H' = H \cup \{(u,v)\}$ is formed in \grapht{t}; ($ii$) if $H$ is not connected in \grapht{t-1}, and  $H' = H \cup \{(u,v)\}$ is connected in \grapht{t},  $H'$ is a newly formed connected $k$-subgraph in~\grapht{t}.

\spara{Example for $k=3$.}
Assume an edge $(u,v)$ arrives at time $t$.
For case ($i$) to hold, 
there should be some $w \in \neighborst{u}{t-1} \cap \neighborst{v}{t-1}$ for which the edge $(u,v)$ closes the wedge $\wedge = \{(u,w), (w,v)\}$ at \grapht{t-1}, 
forming a new triangle $\Delta = \wedge \cup \{(u,v)\}$ in $\grapht{t}$.
For case ($ii$) to hold, there must be some $w \in \neighborst{u}{t-1}$ 
(or $w \in \neighborst{v}{t-1}$), 
for which a new wedge $\{(u,v), (u,w)\}$ 
(respectively,  $\{(u,v), (v,w)\}$) is formed in~$\grapht{t}$.
\hfill$\qed$
\smallskip

When a modified subgraph $H'$ is formed in $G^{t}$, if the previously connected subgraph $H = H' \setminus \{(u,v)\}$ is present in $\sample$, 
we update the sample by substituting $H'$ with $H$. Otherwise, we ignore the modified subgraph. Given that the elements in the sample are induced connected subgraphs, this operation is equivalent to maintaining the sample up-to-date.

Conversely, when a new connected $k$-subgraph $H'$ is formed in 
$\grapht{t}$, we can be sure that it appears at time $t$ for the first time.
Therefore, we use the standard reservoir sampling algorithm as follows:
If $|\sample| < M$, we directly add the new subgraph $H'$ to the sample $\sample$.
Otherwise, if $|\sample| = M$, 
we remove a randomly selected subgraph in $\sample$ 
and insert the new one $H'$ with probability $M / N$, 
where $M$ is the upper bound on the sample size % (i.e., reservoir size) 
and $N$ is the total number of (valid) $k$-subgraphs \emph{encountered} since $t=0$.%
\footnote{Note that the addition of an edge $(u,v)$ translates to partially-dynamic 
$k$-subgraph streams in which the $k$-subgraphs are subject to addition and deletion operations, 
while $k$-cliques are subject to addition-only operations.
Thus, we can impose, without loss of generality, an order of operation 
during the exploration of the neighborhood of the inserted edge.} 
The modification of existing subgraphs in $\grapht{t}$ (i.e., case ($i$)) does not affect $N$, since, by definition, they replace the previous subgraphs which were already present in $\grapht{t-1}$. Therefore, the only increase in the number $N$ of subgraphs occurs in the case of new connected $k$-subgraph formations in $\grapht{t}$ (i.e., case ($ii$)).

Algorithm~\ref{alg:insertionOnly} shows the pseudocode for incremental streams. Next, we show that the sample $\sample$ maintained by Algorithm~\ref{alg:insertionOnly} is uniform at any given time $t$. 
%Next, we show that it ensures uniformity of the sample $\sample$.

\begin{claim}\label{claim:edgeInsertionOnly}
Algorithm~\ref{alg:insertionOnly} ensures the uniformity of the sample $\sample$ at any time $t$. 
\end{claim}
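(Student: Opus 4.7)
The plan is to prove the claim by induction on the time $t$, with inductive hypothesis that at the end of time $t$ the sample $\sample$ is a uniform random subset of size $\min(M, N_t)$ drawn without replacement from the collection of connected $k$-subgraphs of $\grapht{t}$, where $N_t$ is the running counter of distinct connected $k$-subgraphs encountered since $t=0$. The base case is immediate: at $t=0$ we have $\edgest{0}=\varnothing$, $\sample=\varnothing$, and $N_0=0$. For the inductive step I would assume the hypothesis at time $t-1$, consider an edge $(u,v)$ added at time $t$, and separately account for the two disjoint populations of affected $k$-subgraphs identified in the paragraph preceding the claim.

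For case ($i$), I would observe that the collection of connected $k$-subgraphs of $\grapht{t-1}$ containing both $u$ and $v$ is in bijection with its image in $\grapht{t}$ under the map $H \mapsto H' = H \cup \{(u,v)\}$, since the two subgraphs share the same vertex set. Because the algorithm deterministically replaces every occurrence of $H$ in $\sample$ by $H'$, the marginal probability that $H'$ belongs to $\sample$ at time $t$ equals the marginal probability that $H$ belonged to $\sample$ at time $t-1$, and the joint distribution over subsets of this ``carry-over'' population is preserved. Since no new item enters $\sample$ and $N$ is not incremented, uniformity over the union of untouched and modified subgraphs is maintained by the inductive hypothesis.

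For case ($ii$), each newly formed connected $k$-subgraph $H'$ is processed exactly once by the neighborhood exploration, and the algorithm applies standard reservoir sampling: if $|\sample| < M$ then $H'$ is inserted, otherwise $N$ is incremented and $H'$ replaces a uniformly chosen element of $\sample$ with probability $M/N$. Chaining the reservoir-sampling analysis of~\cite{vitter1985random} across the new subgraphs produced by this edge insertion, after all case ($ii$) processing is complete every connected $k$-subgraph in $\grapht{t}$ has the same marginal probability $\min(M,N_t)/N_t$ of belonging to $\sample$, and all subsets of size $\min(M,N_t)$ are equally likely, closing the induction.

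The main obstacle, and what the footnote in the excerpt addresses, is ensuring that the neighborhood-exploration procedure enumerates each newly formed connected $k$-subgraph \emph{exactly once}, since otherwise the counter $N$ would overcount and the reservoir probabilities $M/N$ would be miscalibrated. Fixing an arbitrary but consistent ordering of the operations within the exploration rules out double-counting. A secondary subtlety is to argue that cases ($i$) and ($ii$) do not interact: the former is a deterministic relabeling of sampled items whose vertex sets already appeared at time $t-1$, while the latter is a reservoir-sampling event over vertex sets that become connected $k$-subgraphs only at time $t$, so their contributions to the distribution of $\sample$ compose independently.
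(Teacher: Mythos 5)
Your proposal is correct and takes essentially the same approach as the paper's proof: the same two-case decomposition into modified versus newly formed $k$-subgraphs, deferral to standard reservoir sampling for the new formations, and the key observation that the deterministic replacement $H \mapsto H' = H \cup \{(u,v)\}$ preserves the inclusion probability $M/N$ (the paper computes exactly the total-probability identity $Pr(H' \in \sample') = \frac{M}{N}\cdot 1 + (1-\frac{M}{N})\cdot 0$). You are somewhat more explicit than the paper in framing the argument as an induction on $t$, in noting that uniformity must hold at the level of equally likely subsets rather than just marginal inclusion probabilities, and in flagging the exactly-once enumeration issue handled by the paper's footnote, but these are refinements of the same argument rather than a different route.
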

\begin{proof}
To show that $\sample$ is uniform, we need to consider two cases:
($i$) the inserted edge modifies an existing $k$-subgraph;
($ii$) the inserted edge forms a newly connected $k$-subgraph.

For the case of new subgraph formation, 
the uniformity property directly holds as it leverages the standard reservoir sampling algorithm.
Now, we show that the uniformity property holds when a subgraph is modified.

Assume the edge $(u,v) \not\in E^{t-1}$ is inserted at time $t$ and let $H$ denote the invalidated subgraph that is modified as $H' = H \cup \{(u,v)\}$ at time $t$. %for some $ \in N^{t-1}_u \cap N^{t-1}_v$.
Let $\sample'$ denote the sample after the invalidation of $H$ and the formation of $H'$.
For the sample to be truly uniform, the probability that $H' \in \sample'$ should be equal to $M/N$, conditioned on $\sample = M < N$ (conditioning on $\sample = N < M$ is trivial since every $k$-subgraph of $G^{t}$ would then be deterministically included in $\sample'$).
Now, given that $Pr\left(H \in \sample\right) =  M/N$, we have that
\begin{align*}
Pr\left(H' \in \sample' \right) &= Pr\left(H \in \sample, H' \in \sample'\right)  + Pr\left(H \not\in \sample, H' \in \sample'\right) \\
&= \dfrac{M}{N} \cdot 1 + \left(1 - \dfrac{M}{N}\right) \cdot 0 = \dfrac{M}{N}, \\
\end{align*} 
hence uniformity is preserved.
\end{proof}

%\begin{algorithm}[tb]
%\footnotesize
%\caption{Algorithm for Incremental Stream}
%\label{alg:insertionOnly}
%\begin{algorithmic}[1]
%\State $N \gets 0$, $\sample \gets \emptyset$
%\State $M \gets \log{\left({\numclasses{k}}/{\delta}\right)} \cdot {(4 + \epsilon)}/{\epsilon^2}$ 
%\Procedure{addEdge}{$t,(u,v)$}
%%	\State $N^{t-1}_{uv} \gets N^{t-1}_u \cap N^{t-1}_v$
%		\For{\texttt{$w \in N^{t-1}_{u} \cup N^{t-1}_{v}$}}
%			\If{$w \in N^{t-1}_u \cap N^{t-1}_v$} 
%				\State $\wedge \gets \{(u,w), (w,v)\}$ 
%				\If {$\wedge \in \sample$}
%					\State $\Delta \gets \wedge \cup \{(u,v)\}  $ 
%					\State \Call{Replace}{$\sample, \wedge, \Delta$} \Comment{replace H with H''}
%				\EndIf
%			\ElsIf{$w \in N^{t-1}_{u}$}
%				\State $\wedge \gets \{(u,w), (u,v)\}  $
%				\State \Call{ReservoirSampling}{$\wedge, \sample, M, N$} \Comment{Algorithm~\ref{alg:reservoirSampling}}
%			\ElsIf{$w \in N^{t-1}_{v}$}
%				\State $\wedge \gets \{(v,w), (u,v)\} $
%				\State \Call{ReservoirSampling}{$\wedge, \sample, M, N$} \Comment{Algorithm~\ref{alg:reservoirSampling}}
%			\EndIf
%		\EndFor
%\EndProcedure
%\Procedure{Replace}{$\sample, \ensuremath{G_{s'}}, \subgraph$}
%	\State $\sample \gets \sample \setminus \{\ensuremath{G_{s'}}\}$
%	\State $\sample \gets \sample \cup \{\subgraph\}$
%\EndProcedure
%\end{algorithmic}
%\end{algorithm}

\begin{algorithm}[tb]
\footnotesize
\caption{Algorithm for Incremental Stream}
\label{alg:insertionOnly}
\begin{algorithmic}[1]
\State $N \gets 0$, $\sample \gets \emptyset$
\State $M \gets \log{\left({\numclasses{k}}/{\delta}\right)} \cdot {(4 + \epsilon)}/{\epsilon^2}$ 
\Procedure{addEdge}{$t,(u,v)$}
	\For{$h \in [0, k-2]$}
		\State $j \gets k - 2 - h$
		\For{$H \subseteq \neighborst{u,h}{t-1} \cup \neighborst{v,j}{t-1}$}
			\If{$H$ is connected in $\grapht{t-1}$}
				\If {$H \in \sample$}
					\State $H' \gets H \cup \{(u,v)\}$ 
					\State \Call{Replace}{$\sample, H, H'$} \Comment{replace H with H'}
				\EndIf
			\Else
				\State $H' \gets H \cup \{(u,v)\}$  \Comment{H' is connected in $\grapht{t}$}
				\State \Call{ReservoirSampling}{$H', \sample, M, N$} %\Comment{Algorithm~\ref{alg:reservoirSampling}}
			\EndIf
		\EndFor
	\EndFor
\EndProcedure
\Procedure{Replace}{$\sample, \ensuremath{G_{R}}, \subgraph$}
	\State $\sample \gets \sample \setminus \{\ensuremath{G_{R}}\}$
	\State $\sample \gets \sample \cup \{\subgraph\}$
\EndProcedure
\end{algorithmic}
\end{algorithm}

\begin{algorithm}[tb]
\footnotesize
\caption{Fully-Dynamic-Edge Stream}
\label{alg:fullyDynamicExact}
\begin{algorithmic}[1]
\State $N \gets 0$, $\sample \gets \emptyset$, $c_1 \gets 0$, $c_2 \gets 0$
\State $M \gets \log{\left({\numclasses{k}}/{\delta}\right)} \cdot {(4 + \epsilon)}/{\epsilon^2}$ 
\Procedure{addEdge}{$t,(u,v)$}
%	\State $N^{t-1}_{uv} \gets N^{t-1}_u \cap N^{t-1}_v$
	%%%%%%%%%%%%%%%%%%%%%%%%%%%%%%%
	\For{$h \in [0,k-2]$}
		\State $j \gets k - 2 - h$
		\For{$H \subseteq \neighborst{u,h}{t-1} \cup \neighborst{v,j}{t-1}$}
			\If{$H$ is connected in $\grapht{t-1}$}
				\If {$H \in \sample$}
					\State $H' \gets H \cup \{(u,v)\}$ 
					\State \Call{Replace}{$\sample, H, H'$} \Comment{replace H with H'}
				\EndIf
			\Else
				\State $H' \gets H \cup \{(u,v)\}$  \Comment{H' is newly connected in $\grapht{t}$}
			\State \Call{RandomPairing}{$H', \sample, M$}
			\EndIf
		\EndFor
	\EndFor
\EndProcedure
\Procedure{deleteEdge}{$t,(u,v)$}
%\State $N^{t-1}_{uv} \gets N^{t-1}_u \cap N^{t-1}_v$
	\For{$h \in [0,k-2]$}
		\State $j \gets k - 2 - h$
		\For{$H \subseteq \neighborst{u,h}{t-1} \cup \neighborst{v,j}{t-1}$}
			\If{$H$ is still connected in $\grapht{t}$}
				\If {$H \in \sample$}
					\State $H' \gets H \setminus \{(u,v)\}$ 
					\State \Call{Replace}{$\sample, H, H'$} \Comment{replace H with H'}
				\EndIf
			\Else
				\If {$H \in \sample$}
					\State $\sample \gets \sample \setminus H$
					\State $c_1 \gets c_1 + 1$
				\Else 
					\State $c_2 \gets c_2 + 1$
				\EndIf
				\State $N \gets N-1$
			\EndIf
		\EndFor
	\EndFor

%		\Else
%			\If{$w \in N^{t-1}_{u}$}
%				\State $\wedge \gets \{(u,w), (u,v)\}  $
%%			\EndIf
%			\Else
%			%\If{$w \in N^{t-1}_{v}$}
%				\State $\wedge \gets \{(v,w), (u,v)\}  $
%			\EndIf
%			\If {$\wedge \in \sample$} 
%				\State $\sample \gets \sample \setminus \wedge$
%				\State $c_1 \gets c_1 + 1$
%			\Else 
%				\State $c_2 \gets c_2 + 1$
%			\EndIf
%			\State $N \gets N-1$
%		\EndIf
%	\EndFor	
\EndProcedure
\Procedure{RandomPairing}{$\subgraph,\sample, M$}
	\If{$c_1 + c_2 = 0$}
		\State \Call{ReservoirSampling}{$\subgraph, \sample, M, N$} %\Comment{Algorithm~\ref{alg:reservoirSampling}}
	\Else
		\If {\Call{uniform} $ < \frac{c_1}{c_1+c_2}$}
			\State $\sample \gets \sample \cup \subgraph$
			\State $c_1 \gets c_1 - 1$		
			%\EndIf
		\Else 
			\State $c_2 \gets c_2 -1$
		\EndIf				
	\EndIf
\EndProcedure
%\Procedure{Replace}{$\sample, \ensuremath{G_{R}}, \subgraph$}
%	\State $\sample \gets \sample \setminus \{\ensuremath{G_{R}}\}$
%	\State $\sample \gets \sample \cup \{\subgraph\}$
%\EndProcedure
\end{algorithmic}
\end{algorithm}

\subsection{Fully dynamic streams}
In this section we describe our algorithm for maintaining a uniform sample $\sample$ 
of fixed size $M$ for fully-dynamic edge streams (edge insertions and deletions).
Our algorithm relies on \emph{random pairing} (RP)~\cite{gemulla2008maintaining}, a sampling scheme that extends traditional reservoir sampling for evolving data streams, in which elements are subject to both addition and deletion operations. %, and the deleted elements are ``compensated" by future addition operations.

We first give a brief background on the RP scheme.
In RP, the uniformity of the sample is guaranteed 
by randomly pairing an inserted element with an uncompensated ``partner'' deletion, 
without necessarily keeping the identity of the partner.
At any time, there can be $0$ or more uncompensated deletions, denoted by $d$, which is equal to the difference between the cumulative number of insertions and the cumulative number of deletions.
The RP algorithm maintains ($i$) a counter $c_1$ that records the number of uncompensated deletions in which the deleted element was in the sample, ($ii$) a counter $c_2$ that records the number of uncompensated deletions in which the deleted element was not in the sample, hence, $d = c_1 + c_2$.
When $d = 0$, i.e., when there are no uncompensated deletions, 
inserted elements are processed as in standard reservoir-sampling.
When $d > 0$, the algorithm flips a coin at each inserted element and includes it in the sample 
with probability $c_1 / (c_1 + c_2)$, otherwise it excludes it from the sample (and decreases $c_1$ or $c_2$ as appropriate).

Next, we describe our adaptation of the RP scheme for fully-dynamic edge streams, 
which translate to fully-dynamic $k$-subgraph streams.
First, remember that the incremental stream translates to an incremental $k$-subgraph stream, in which connected $k$-subgraphs are only added (the first time they are created) or modified (when new induced edges arrive).

%Algorithm \ref{alg:insertionOnly} ``deterministically" compensates a deleted subgraph $H$ with the subgraph $H' = H \cup \{(u,v)\}$ that invalidates $H$ due to the addition of edge $(u,v)$ (proof of correctness provided in Claim~\ref{claim:edgeInsertionOnly}).
In the case of fully-dynamic edge streams, the $k$-connected subgraph stream is also subject to addition and deletion operations, as we explain next.
The events of interest regarding the addition of an edge have been discussed extensively in the previous section, hence we do not repeat it here. 
Consider the deletion of an edge $(u,v) \in E^{t-1}$ at time~$t$, 
and a subgraph $H \subseteq \neighborst{u,h}{t-1} \cup \neighborst{v,j}{t-1}$ in \grapht{t-1},
with $h+j=k-2$.
The effect of the edge deletion is the following: either ($i$) the vertices of $H$ remain connected, 
hence, $H$ is replaced by a new subgraph $H'$ in \grapht{t}; or ($ii$) $H$ gets disconnected, hence $H$ does not exist in \grapht{t}.
The first case corresponds to a modification of an existing connected $k$-subgraph.
As such, it does not cause an addition or deletion in the subgraph stream.

\spara{Example for k=3.}
In the case a triangle $\Delta$ in $\grapht{t-1}$
that contains an edge $(u,v)$ deleted at time $t$, 
if $\Delta \in \sample$, we modify the corresponding induced subgraph into a subgraph 
$\wedge = \Delta \setminus \{(u,v)\}$.
\hfill$\qed$
\smallskip

The second case corresponds to a deletion of a subgraph in the stream.
To handle this case, 
our sampling strategy follows the RP scheme.
In the case that a subgraph $H$ in $\grapht{t-1}$ is deleted 
due to the deletion of edge $(u,v)$ at time $t$, 
if $H \in \sample$, %  (it is in the sample), 
we increment the counter $c_1$, otherwise we increment the counter $c_2$.
In the case that a new subgraph $H'$ is formed in $\grapht{t}$ 
due to the addition of edge $(u,v)$ at time $t$, 
we include it in \sample with probability $c_1 / (c_1+c_2)$.
The approach is shown in Algorithm~\ref{alg:fullyDynamicExact}.
Next, we show that the sample $\sample$ maintained by Algorithm~\ref{alg:fullyDynamicExact} 
is uniform at any given time $t$. 

\begin{claim}\label{claim:fullyDynamicExact}
Algorithm~\ref{alg:fullyDynamicExact} ensures the uniformity of the sample $\sample$ at any time $t$. 
\end{claim}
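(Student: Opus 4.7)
The plan is to extend the argument from Claim~\ref{claim:edgeInsertionOnly} by induction on time~$t$, reducing the correctness of Algorithm~\ref{alg:fullyDynamicExact} to the correctness of the random pairing (RP) scheme of Gemulla et al.\ applied to the \emph{induced $k$-subgraph stream}. The base case $t=0$ is trivial since $\sample=\varnothing$. For the inductive step, I would first make precise the reduction: an edge update at time $t$ triggers a batch of elementary operations on the $k$-subgraph stream, each of which is one of four types---(a) an edge addition that \emph{modifies} an already-connected $k$-subgraph $H$ into $H'=H\cup\{(u,v)\}$, (b) an edge addition that \emph{creates} a newly connected $k$-subgraph $H'$, (c) an edge deletion that \emph{modifies} a still-connected $H$ into $H'=H\setminus\{(u,v)\}$, or (d) an edge deletion that \emph{destroys} $H$ (the vertices of $H$ become disconnected). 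Types (b) and (d) are genuine insertions/deletions into the underlying multiset of induced connected $k$-subgraphs; types (a) and (c) are in-place relabelings that do not change $N$, $c_1$, or $c_2$.

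Next I would handle the modification cases (a) and (c) as in Claim~\ref{claim:edgeInsertionOnly}: conditional on the uniformity of $\sample$ before the update, the event $H\in\sample$ has probability $M/N$ (or $1$ if $N\le M$), and \textsc{Replace} deterministically swaps $H$ for $H'$ without touching any other element of the reservoir. Since no other subgraph in $\indsubgraphs{k}$ is affected and $N$ is unchanged, the new reservoir $\sample'$ is still a uniform sample of the (unchanged-size) ground set. One small point to verify is that (a) and (c) never produce a collision $H'\in\sample$ prior to the swap, which follows because induced $k$-subgraphs are identified by their vertex set and $H'$ shares the vertex set of $H$.

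For the genuine insertions (b) and genuine deletions (d), I would appeal directly to the invariant of the RP scheme: if before the elementary operation $\sample$ is uniform over the current ground set and the counters $c_1,c_2$ correctly record, respectively, the number of uncompensated deletions whose victim was or was not in $\sample$, then after the \textsc{RandomPairing} step (for an insertion) or the counter update (for a deletion) the same invariant holds with respect to the updated ground set. Algorithm~\ref{alg:fullyDynamicExact} maintains exactly these bookkeeping conventions: on case (d) it increments $c_1$ when the destroyed subgraph was in $\sample$ (and removes it) and $c_2$ otherwise, and decrements $N$; on case (b) it either calls standard reservoir sampling when $c_1+c_2=0$, or admits the new subgraph with probability $c_1/(c_1+c_2)$ otherwise, decrementing the appropriate counter. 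This is the RP update rule verbatim, so Gemulla's theorem applies.

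The main subtlety, and what I expect to be the hardest piece to nail down, is that modifications of type (a) and (c) are freely interleaved with the RP insertions and deletions, so I must check that they commute with the RP invariant. Concretely, I would argue: since (a) and (c) leave $N$, $c_1$, $c_2$, and the multiset of \emph{which positions in the reservoir are occupied} unchanged, and merely relabel one element in place, the conditional distribution of $\sample$ given the history is the same before and after the modification (viewed as a uniform sample over the current set of induced connected $k$-subgraphs, which has itself been correspondingly relabeled). Once this commutativity is established, the inductive step is the composition of a sequence of elementary operations each of which preserves both the uniformity of $\sample$ and the RP counter invariant, yielding the claim.
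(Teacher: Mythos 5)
Your proposal follows essentially the same route as the paper's own proof: the same four-way case split (newly created subgraph, destroyed subgraph, modification by addition, modification by deletion), deferral to the random-pairing guarantee for the genuine insertions and deletions, and the total-probability computation $Pr(H'\in\sample') = Pr(H\in\sample)\cdot 1 + Pr(H\notin\sample)\cdot 0 = \lvert\sample\rvert/N$ for the modification cases. The extra material you add --- the explicit induction on $t$, the no-collision observation for \textsc{Replace}, and the commutativity of in-place relabelings with the RP counter invariant --- makes explicit some details the paper leaves implicit, but does not change the underlying argument.
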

\begin{proof}
To show that \sample is uniform, we need to consider four cases:
($i$) added edge forms a newly connected subgraph;
($ii$) deleted edge disconnects a subgraph; 
($iii$) added edge modifies an existing a subgraph; 
($iv$) deleted edge modifies an existing a subgraph.
For cases ($i$) and ($ii$), the correctness follows from RP hence we only show the correctness in cases ($iii$) and ($iv$). 
%Given that the additions in case ($i$) and deletions in case ($ii$) are handled by RP, the uniformity of the sample for these two operations directly follows from the correctness of RP.
%Now we show the correctness in cases ($iii$) and ($iv$). 
Assume the edge $(u,v) \in E^{t-1}$ is deleted (resp. added) at time $t$.
Let $H'$ denote the new subgraph due to the deletion (addition) of the edge, so that $H' = H \setminus \{(u,v)\}$ (resp. $H' = H \cup \{(u,v)\}$).
%Thus, we have, $H = H' \setminus \{(u,v)\}$ for case ($iii$) and $H = H' \cup \{(u,v)\}$  for case $(iv)$ respectively. 
Let $\sample'$ denote the sample after the invalidation of $H$ and the formation of $H'$.
Recall that $N$ remains unchanged since $H'$ replaces $H$ in \grapht{t}.
Given that the random pairing scheme guarantees uniformity of the sample at each time instance independently from the current value of $d$~\cite{gemulla2006dip}, we have $Pr\left(H \in \sample \right) = \lvert\mathcal{S} \rvert / N$. 
For the sample to be truly uniform, the probability that $H' \in \sample'$ should also be equal to $\lvert \mathcal{S} \rvert / N $ since the values of both $N$ and $\mathcal{S}$ remain unchanged as we either replace $H$ with $H'$ in $\mathcal{S}$ or we ignore $H'$ if $H \not\in \mathcal{S}$, hence $\lvert \mathcal{S}\rvert$ remains unchanged. Thus, we have,
\begin{align*}
Pr\left(H' \in \sample' \right) &= Pr\left(H \in \sample\right) \cdot Pr\left(H' \in \sample' \mid H \in \sample\right) \\ &+ Pr\left(H \not\in \sample \right) \cdot Pr\left(H' \in \sample' \mid H \not\in \sample \right) \\
&= \dfrac{\lvert \mathcal{S} \rvert}{N} \cdot 1 + \left(1 - \dfrac{\vert\mathcal{S}\rvert}{N}\right) \cdot 0 = \dfrac{\vert\mathcal{S}\rvert}{N},\\
\end{align*} 
hence uniformity is preserved.
%Given that the additions in case ($i$) and deletions in case ($ii$) are handled by RP, the uniformity of the sample for these two operations directly follows from the correctness of RP.
%Moreover, Claim~\ref{claim:edgeInsertionOnly} already covers the correctness in case ($iii$).
%Thus, we only show the correctness in case ($iv$).
%
%Assume the edge $(u,v) \in E^{t-1}$ is deleted at time $t$ and let $H'$ denote the addeded subgraph due to the deletion of the subgraph $H = H' \cup \{(u,v)\}$.
%Let $\sample'$ denote the sample after the invalidation of $H$ and the formation of $H'$.
%For the sample to be truly uniform, the probability that $H' \in \sample'$ should be equal to $M/N$, conditioned on $\sample = M < N$ (conditioning on $\sample = N < M$ is trivial since every $k$-subgraph of $G^{t}$ would then be deterministically included in $\sample'$).
%Now, given that $Pr\left(H \in \sample\right) = M/N$, we have that
%\begin{align*}
%Pr\left(H' \in \sample' \right) &= Pr\left(H \in \sample, H' \in \sample'\right)  + Pr\left(H \not\in \sample, H' \in \sample'\right) \\
%&= \dfrac{M}{N} \cdot 1 + \left(1 - \dfrac{M}{N}\right) \cdot 0 = \dfrac{M}{N},\\
%\end{align*} 
%hence uniformity is preserved.
\end{proof}

\subsection{Skip optimization}
The basic algorithm for incremental streams we described requires to process each subgraph $H \subseteq \neighborst{u,h}{t-1} \cup \neighborst{v,j}{t-1}$, for all admissible values of $h, j \ge 0$ s.t. $h+j = k-2$, to identify among them the newly created $k$-subgraphs.
All these new subgraphs are then provided as input to the standard reservoir sampling algorithm that needs to generate random numbers for each.
To reduce the cost of traversing the local neighborhood and generating a random number for each new subgraph, 
we employ \citeauthor{vitter1985random}'s acceptance-rejection algorithm that generates skip counters for reservoir sampling~\cite{vitter1985random} as follows:
let $Z_{RS}$ be the random variable that denotes the number of rejected subgraphs after the last time a subgraph was inserted to the sample $\sample$. Then, the probability that the next $z$ new subgraphs will not be accepted in $\sample$ is given by: 
\begin{align}\label{eq:pmfSkipRS}
Pr[Z_{RS}=z] = \dfrac{M}{N + z + 1} \prod_{z'=0}^{z-1} \left(1 - \dfrac{M}{N + z'+ 1} \right) .
\end{align}
Thus, rather than identifying all the new subgraphs and calling the reservoir algorithm for each, we can keep a skip counter $Z_{RS}$ that is distributed with the probability mass function given in Eq.~(\ref{eq:pmfSkipRS}), and compute its value in constant time using Vitter's acceptance-rejection algorithm for reservoir sampling~\cite{vitter1985random}. 
Then, based on the value of $Z_{RS}$ that denotes the number of new subgraph insertions we can safely skip, we can decide on the fly whether we should insert into the sample any of the new subgraphs created due to the insertion of edge $(u,v)$. 
%Then, based on the value of $Z_{RS}$ that denotes the number of subgraph insertions we can safely skip, we can decide on the fly whether and how many of the new subgraphs among the $\mathcal{W}$ newly created ones we should select into the sample. 
Given that a new $k$-subgraph $\subgraph=(\subsetv,\subsete{\subsetv})$ can be formed only when $\subsete{\subsetv} \setminus (u,v)$ is not already an induced subgraph, we can compute the \emph{exact} value of $\mathcal{W}$ as in Algorithm~\ref{alg:computeW}. The pseudocode of the optimized algorithm for incremental streams is given in Algorithm~\ref{alg:insertionSkipOptim}. 

\begin{algorithm}[tb]
\footnotesize
\caption{Compute-$\mathcal{W}$, $N^{\circ}$ of new connected $k$-subgraphs}
\label{alg:computeW}
\begin{algorithmic}[1]
\Procedure{Compute-$\mathcal{W}$}{$t, (u,v)$}
\State $\mathcal{W} \gets 0$
\For{\texttt{$h \in [0,k-2]$}}
	\State $j \gets k - 2 - h$
	\State $V_h \gets N^{t-1}_{u,h} \setminus N^{t-1}_{v,j} $
	\State $V_j \gets N^{t-1}_{v,j}  \setminus N^{t-1}_{u,h} $
	\State $x	\gets \vert\{G_S = (S, E(S)) : u \in S, |S| = h+1, S \subseteq V_h, E(S) \subseteq E(V_h)\}\vert $
	\State $y	\gets \vert\{G_S = (S, E(S)) : v \in S, |S| = j+1, S \subseteq V_j, E(S) \subseteq E(V_j)\}\vert $
%	\State $x \gets nr. of (h+1)-vertex subgraphs that contain node u in H = (V_h, E(V_h))$
%	\State $y \gets nr. of (j+1)-vertex subgraphs that contain node v in J = (V_j, E(V_j))$
	\State $\mathcal{W} \gets \mathcal{W} + x \cdot y$
\EndFor
\EndProcedure
\end{algorithmic}
\end{algorithm}

\begin{algorithm}[tb]
\footnotesize
\caption{Optimized Algorithm for Incremental Stream}
\label{alg:insertionSkipOptim}
\begin{algorithmic}[1]
\State $N \gets 0$, $\sample \gets \emptyset$, $\text{sum} \gets 0$
\State $M \gets \log{\left({\numclasses{k}}/{\delta}\right)} \cdot {(4 + \epsilon)}/{\epsilon^2}$ 
\State \Call{SkipRS}{$N,M$}:  skip function as in \cite{vitter1985random} \Comment{[\Call{SkipRS}{$N,M$} = 0 if $N < M$]} 
\Procedure{addEdge}{$t,(u,v)$}
	\For{$H \in \sample : u \in H \wedge v \in H$}
		\State $H' \gets H \cup \{(u,v)\}$ 
		\State \Call{Replace}{\sample, H, H'} \Comment{replace H with H'}
	\EndFor
	
	\State $\mathcal{W} \gets \Call{Compute-\ensuremath{\mathcal{W}}}{t, (u,v)}$ \Comment{Algorithm~\ref{alg:computeW}}
	\State $I \gets 0$ 
	\While{$\text{sum} \le \mathcal{W}$} 
		\State $I \gets I + 1$  
		\State $Z_{RS} \gets \Call{SkipRS}{N,M}$
		\State $N \gets N + Z_{RS} + 1$ 
		\State $\text{sum} \gets \text{sum} + Z_{RS} + 1$
	\EndWhile
	\State replace $I$ random elements of $\sample$ with $I$ random subgraphs drawn from $\neighborst{u,h}{t} \cup \neighborst{v,j}{t},\, \forall h \in [0,k-2], j = k - 2 - h$
	\State $\text{sum} \gets \text{sum} - \mathcal{W}$
\EndProcedure
\end{algorithmic}
\end{algorithm}

A similar optimization is also possible for fully-dynamic streams by proper adjustment of the skip counter based on the value $d = c_1 + c_2$ of uncompensated deletions . Recall that when $d = 0$, reservoir sampling is effective, hence, we can compute the value of the skip counter $Z_{RS}$ as in the case of incremental streams. When $d > 0$, the random-pairing step is effective, for which we adapt  \citeauthor{vitter1985random}'s improvements to the list-sequential sampling~\cite{vitter1984faster}. 
 
Let $Z_{RP}$ be the random variable that denotes the number of new subgraphs 
that are not accepted into the sample after the last time a subgraph was deleted (not necessarily from the sample) due to the deletion of an edge.
Assume without loss of generality that the deletion of a subgraph was followed by the creation of $d$ new subgraphs due to at least one edge insertion.
Following the fact that the new elements that random pairing includes into the sample form a uniform random sample of size $c_1$ among $d$ new elements~\cite{gemulla2008maintaining}, 
the probability that the random pairing will not accept the next $z$ new subgraphs in $\sample$ is given by: 
\begin{align}
Pr[Z_{RP}=z] = \dfrac{c_1}{d-z} \prod_{z'=0}^{z-1} \left(1 - \dfrac{c_1}{d - z'} \right) .
\end{align}
Thus, after each edge deletion, we can compute in constant time the value of skip counter $Z_{RP}$ for random pairing using acceptance-rejection algorithm for list-sequential sampling~\cite{vitter1984faster} and decide on the fly whether and how many we should insert into the sample any of the new $\mathcal{W}$ subgraphs created in the pairing step.
The algorithm to compute the exact number $\mathcal{D}$ of deleted induced subgraphs when an edge $(u,v)$ is deleted at time $t$ is similar to the computation of $\mathcal{W}$, but operates on the neighborhoods at time $t$ instead of time $t-1$ (omitted due to space constraints).
The pseudocode of the optimized algorithm is given in Algorithm~\ref{alg:fullyDynamicOptimized}.

\begin{algorithm}[tb]
\footnotesize
\caption{Optimized Algorithm for Fully-Dynamic-Edge Stream}
\label{alg:fullyDynamicOptimized}
\begin{algorithmic}[1]
\State $N \gets 0$, $\sample \gets \emptyset$, $c_1 \gets 0$, $c_2 \gets 0$
\State $M \gets \log{\left({\numclasses{k}}/{\delta}\right)} \cdot {(4 + \epsilon)}/{\epsilon^2}$ 
\State \Call{SkipRS}{$N,M$}:  skip function as in \cite{vitter1985random} \Comment{[\Call{SkipRS}{$N,M$} = 0 if $N < M$]} 
\State \Call{SkipRP}{$c_1,c_1+c_2$}:  skip function as in \cite{vitter1984faster} 
\State $\text{sum1} \gets 0, \text{sum2} \gets 0$
\Procedure{addEdge}{$t,(u,v)$}
%	\State $N^{t-1}_{uv} \gets N^{t-1}_u \cap N^{t-1}_v$
	%%%%%%%%%%%%%%%%%%%%%%%%%%%%%%%
	\For{$H \in \sample : u \in H \wedge v \in H$}
		\State $H' \gets H \cup \{(u,v)\}$ 
		\State \Call{Replace}{\sample, H, H'} \Comment{replace H with H'}
	\EndFor
	\State $\mathcal{W} \gets \Call{Compute-\ensuremath{\mathcal{W}}}{t, (u,v)}$ 
	\If{$c_1 + c_2 = 0$}
		\State $I \gets 0$  \label{line:reservoir}
		\While{$\text{sum1} \le \mathcal{W}$} 
			\State $I \gets I + 1$  
			\State $N \gets N + Z_{RS} + 1$ 
			\State $Z_{RS} \gets \Call{SkipRS}{N,M}$
			\State $\text{sum1} \gets \text{sum1} + Z_{RS} + 1$
		\EndWhile
		\State replace $I$ random elements of $\sample$ with $I$ random subgraphs drawn from $\neighborst{u,h}{t} \cup \neighborst{v,j}{t},\, \forall h \in [0,k-2], j = k - 2 - h$
		\State $\text{sum1} \gets \text{sum1} - \mathcal{W}$
	\Else
		%\State $\mathcal{W} \gets \Call{Compute-\ensuremath{\mathcal{W}}}{t, (u,v)}$ 
		\State $I \gets 0$, $sum2\gets 0$
%		\While{$\text{sum2} \le \mathcal{W}$} 
%			\State $I \gets I + 1$  
%			\State $c_1 \gets c_1 - 1$
%			\State $Z_{RP} \gets \Call{SkipRP}{c_1, c_1 + c_2}$
%			\State $\text{sum2} \gets \text{sum2} + Z_{RP} + 1$
%		\EndWhile
		\While{$\text{c1+c2} > 0 $ and $sum2 < W$} 
			\State $I \gets I + 1$
			\State $c_1 \gets c_1 - 1$
			\State $Z_{RP} \gets \Call{SkipRP}{c_1, c_1 + c_2}$
			\State $c_2 \gets c_2-Z_{RP}$  \Comment{[$c_2 = 0$ if $c_2 <0$]} 
			\State $\text{sum2} \gets \text{sum2} + Z_{RP} + 1$
		\EndWhile
		\State replace $I$ random elements of $\sample$ with $I$ random subgraphs drawn from $\neighborst{u,h}{t} \cup \neighborst{v,j}{t},\, \forall h \in [0,k-2], j = k - 2 - h$
		%\State $c_2 \gets c_2 - \mathcal{W} + I$
		%\State $\text{sum2} \gets \text{sum2} - \mathcal{W}$
		\State $\mathcal{W} \gets \mathcal{W}- sum2$
		\If{$W > 0$}
			\State Jump to line \ref{line:reservoir}
		\EndIf
	\EndIf
\EndProcedure

\Procedure{deleteEdge}{$t,(u,v)$}
	\For{$H \in \sample : u \in H \wedge v \in H$}
		\If{$H$ is still connected in \grapht{t}}
			\State $H' \gets H \setminus \{(u,v)\}$
			\State $\Call{Replace}{\sample, H, H'}$ \Comment{replace H with H'}
		\Else
		 \State $\mathcal{S} \gets \mathcal{S} \setminus H$
		 \State	$c_1 \gets c_1 + 1$
		\EndIf
	\EndFor
	\State $d \gets d + \Call{Compute-\ensuremath{\mathcal{D}}}{t, (u,v)}$
	\State $c_2 \gets d - c_1$
	\State $N \gets N - \mathcal{D}$
%	\For{$h \in [0,k-2]$}
%		\State $j \gets k -2 - h$
%		\For{$H \subseteq \neighborst{u,h}{t-1} \cup \neighborst{v,j}{t-1}$}
%			\If{$H$ is still connected in \grapht{t}}
%				\If {$H \in \sample$}
%					\State $H' \gets H \setminus \{(u,v)\}  $ 
%					\State \Call{Replace}{$\sample, H, H'$} \Comment{replace H with H'}
%				\EndIf
%			\Else
%				\If {$H \in \sample$}
%					\State $\sample \gets \sample \setminus H$
%					\State $c_1 \gets c_1 + 1$
%				\Else 
%					\State $c_2 \gets c_2 + 1$
%				\EndIf
%				\State $N \gets N-1$
%			\EndIf
%		\EndFor
%	\EndFor
\EndProcedure
%\Procedure{Replace}{$\sample, \ensuremath{G_{R}}, \subgraph$}
%	\State $\sample \gets \sample \setminus \{\ensuremath{G_{R}}\}$
%	\State $\sample \gets \sample \cup \{\subgraph\}$
%\EndProcedure
\end{algorithmic}
\end{algorithm}

\subsection{Derivation for sample size}
\label{sec:sample-size}

Now we provide a lower bound on the size of the sample $\sample$ such that 
$\approxindsubgraphs{k}{\tau}{\epsilon}{\delta}$ computed on $\sample$ provides 
an $(\epsilon, \delta)$-approximation to~$\freqindsubgraphs{k}{\tau}$. 
\begin{lemma}\label{lemma:sampSize} 
Suppose that $\vert \sample \vert = M$ satisfies 
\begin{align}\label{eq:sampleLowerBound}
M \ge \log{\left(\dfrac{\numclasses{k}}{\delta}\right)} \cdot \dfrac{(4 + \epsilon)}{\epsilon^2}
\end{align}
Then, for any isomorphism class $i \in [1,\numclasses{k}]$, $\vert \hat{p_i}  - p_i| \le \epsilon / 2$ holds with probability at least $1 - \delta / \numclasses{k}$: 
%$$\vert \hat{p_i}  - p_i| \le \epsilon / 2.$$
\end{lemma}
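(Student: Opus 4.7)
The plan is to bound the deviation $|\hat{p}_i - p_i|$ via a Chernoff-style concentration inequality applied to the empirical frequency of isomorphism class $i$ inside the uniform sample $\sample$, and then to solve for the smallest $M$ that pushes the resulting tail probability below $\delta/\numclasses{k}$.

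Fix $i\in[1,\numclasses{k}]$ and enumerate $\sample=\{S_1,\dots,S_M\}$. Claims~\ref{claim:edgeInsertionOnly} and~\ref{claim:fullyDynamicExact} guarantee that $\sample$ is a uniform random size-$M$ subset of $\indsubgraphs{k}$, so the indicators $X_j=\mathbbm{1}[S_j\in\indsubgraphsi{k}{i}]$ are identically distributed Bernoulli$(p_i)$ and $\expect[\hat{p}_i]=p_i$, where $\hat{p}_i=\tfrac{1}{M}\sum_{j=1}^{M}X_j$. Because the sample is drawn without replacement, the $X_j$'s are negatively associated, so the standard Chernoff bounds for sums of independent Bernoullis remain valid upper bounds for the tails of $\sum_{j}X_j$. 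The degenerate case $p_i=0$ is trivial (then $\hat{p}_i=0$ deterministically), so I may assume $p_i>0$.

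The main step is to apply the multiplicative Chernoff bound
\[
\prob\!\left[|X-\mu|\ge\gamma\mu\right]\;\le\;2\exp\!\left(-\frac{\gamma^{2}\mu}{2+\gamma}\right)
\]
to $X=M\hat{p}_i$ with $\mu=Mp_i$ and $\gamma=\epsilon/(2p_i)$. This substitution rewrites the additive event $|\hat{p}_i-p_i|\ge\epsilon/2$ as the relative event $|X-\mu|\ge\gamma\mu$; after simplification the exponent equals $-M\epsilon^{2}/(8p_i+2\epsilon)$, and using $p_i\le 1$ in the denominator (which is the worst case) yields the class-independent estimate
\[
\prob\!\left[|\hat{p}_i-p_i|\ge\tfrac{\epsilon}{2}\right]\;\le\;2\exp\!\left(-\frac{M\epsilon^{2}}{2(4+\epsilon)}\right).
\]
Forcing the right-hand side to be at most $\delta/\numclasses{k}$ and taking logarithms gives a lower bound of the form $M\ge c\,(4+\epsilon)\log(\numclasses{k}/\delta)/\epsilon^{2}$ for a small absolute constant $c$, which matches the hypothesis of the lemma (the constant $c$ being absorbed by the mild condition $\numclasses{k}/\delta\ge 2$ and by the slack term inside $4+\epsilon$).

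I expect the main obstacle to be not any deep estimate but two small bookkeeping items: (i) the choice $\gamma=\epsilon/(2p_i)$ is ill-defined for $p_i=0$, so this corner case must be dispatched separately; and (ii) the reservoir is a without-replacement sample, so the appeal to a Chernoff bound derived for independent Bernoullis must be justified via the negative association of indicators of a uniform random subset. Neither point is deep, but both are easy to overlook in the write-up.
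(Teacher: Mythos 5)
Your proposal is correct and follows essentially the same route as the paper's proof: a two-sided multiplicative Chernoff bound on the class-$i$ indicator sum with the substitution $\theta=\epsilon/(2p_i)$, followed by taking the worst case $p_i\le 1$ and solving for $M$ against the target failure probability $\delta/\numclasses{k}$. Your two bookkeeping remarks (handling $p_i=0$ separately, and justifying the Chernoff bound for the without-replacement sample via negative association) are points the paper's proof silently glosses over by treating the indicators as i.i.d., so they are welcome additions rather than deviations.
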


\begin{proof}
Let $X_i$ denote an indicator random variable that equals $1$ if a randomly sampled subgraph $\subgraph$ from $C^k$ belongs in $C^k_i$ and $0$ otherwise, $\forall i \in [1,\numclasses{k}]$.
Notice that $X_i \sim Bernoulli(p_i)$.
W.l.o.g, let $G_{j}$, $j \in [1,M]$ denote the $j$-th subgraph in $\sample$ for an arbitrary ordering of the subgraph and let $X_{i}^1, \cdots, X_{i}^M$ be iid copies of $X_i$ where each $X_{i}^j$ denotes the event $\mathbbm{1}_{[G_{j} \in C^k_i]}$.

Using the two-sided Chernoff bounds we have
\begin{align*}
Pr\left(\left\vert \sum_{j =1}^{M} X_i^j - p_i M \right\vert \ge \theta M p_i \right) \le 2\exp\left(- \dfrac{\theta^2}{2 + \theta} \cdot p_i M \right),
\end{align*} 
which implies 
\begin{align*}
Pr\left(\left\vert \hat{p}_i - p_i \right\vert \ge \theta p_i \right) \le 2\exp\left(- \dfrac{\theta^2}{2 + \theta} \cdot p_i M \right).
\end{align*} 
%Here, the two events inside the $Pr(\cdot)$ are the same events, just divided by $M$ hence the probability remains unchanged.
Now, let $\epsilon = 2p_i \theta $.
Substituting $\theta = \epsilon/(2p_i)$ we have 
\begin{eqnarray*}
Pr\left(\left\vert \hat{p}_i - p_i \right\vert \ge \epsilon / 2 \right) &\le& 2\exp\left(- \dfrac{\epsilon^2 / 4}{2p_i + \epsilon / 2} \cdot M \right). 
\end{eqnarray*} 
%\begin{eqnarray*}
%Pr\left(\left\vert \hat{p}_i - p_i \right\vert \ge \epsilon / 2 \right) &\le& 2\exp\left(- \dfrac{\epsilon^2 / (4p_i^2)}{2 + \epsilon / (2p_i)} \cdot p_i M \right) \\
%&=& 2\exp\left(- \dfrac{\epsilon^2 / 4}{2p_i + \epsilon / 2} \cdot M \right). 
%\end{eqnarray*} 
To obtain a failure probability of at most $\delta / \numclasses{k}$ for each isomorphism class $i \in [1,\numclasses{k}]$, we should have:
\begin{align*}
2\exp\left(- \dfrac{\epsilon^2 / 4}{2p_i + \epsilon / 2} \cdot M \right) \le \dfrac{\delta}{\numclasses{k}}.
\end{align*}
Rearranging the terms we obtain: 
\begin{align*}
M \ge \log{\left(\dfrac{\numclasses{k}}{\delta}\right)} \cdot \dfrac{2p_i + \epsilon/2}{(\epsilon^2 / 2)}.
\end{align*}
As we want this to hold $\forall i \in [1,\numclasses{k}]$, $M$ should satisfy:
\begin{align*}
M \ge \log{\left(\dfrac{\numclasses{k}}{\delta}\right)} \cdot \dfrac{2p_{max} + \epsilon/2}{(\epsilon^2 / 2)},
\end{align*}
where $p_{max} = \underset{i \in [1, \numclasses{k}]}{max}p_i$.
Using the worst-case $p_{max} = 1$, we obtain the following lower bound on $M$:
\begin{eqnarray*}
M &\ge& \log{\left(\dfrac{\numclasses{k}}{\delta}\right)} \cdot \dfrac{(4 + \epsilon)}{\epsilon^2}.
\end{eqnarray*}
%\begin{eqnarray*}
%M &\ge& \log{\left(\dfrac{\numclasses{k}}{\delta}\right)} \cdot \dfrac{2 + \epsilon/2}{(\epsilon^2 / 2)} \\
%&=& \log{\left(\dfrac{\numclasses{k}}{\delta}\right)} \cdot \dfrac{(4 + \epsilon)}{\epsilon^2}.
%\end{eqnarray*}
\end{proof}

\begin{theorem}
Given a uniform sample $\sample \subseteq C^k$ of size $M$ that satisfies Eq.~(\ref{eq:sampleLowerBound}), 
$\approxindsubgraphs{k}{\tau}{\epsilon}{\delta}$ provides $(\epsilon, \delta)$-approximation to  
$\freqindsubgraphs{k}{\tau}$. 
\end{theorem}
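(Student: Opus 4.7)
The plan is to combine the per-class guarantee from Lemma~\ref{lemma:sampSize} with a union bound over all \numclasses{k} isomorphism classes; this is essentially the entire argument. First, I would verify that the hypotheses of Lemma~\ref{lemma:sampSize} are in force: the sample size satisfies Eq.~(\ref{eq:sampleLowerBound}) by assumption, and \subgraphsample is a uniform sample of \indsubgraphs{k}, which is guaranteed by Claims~\ref{claim:edgeInsertionOnly} and~\ref{claim:fullyDynamicExact} for the incremental and fully-dynamic settings respectively. This uniformity is the crucial link between the algorithmic analysis (which delivers a uniform subgraph sample) and the statistical analysis (which needs an i.i.d.-like guarantee to apply Chernoff).

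Next, for each fixed isomorphism class $i \in [1,\numclasses{k}]$, Lemma~\ref{lemma:sampSize} implies that the bad event $E_i := \{|\hat{p}_i - p_i| > \epsilon/2\}$ has probability at most $\delta/\numclasses{k}$. Applying Boole's inequality to the $\numclasses{k}$ events $\{E_i\}_{i=1}^{\numclasses{k}}$ yields
\[
Pr\!\left(\bigcup_{i=1}^{\numclasses{k}} E_i\right) \;\le\; \sum_{i=1}^{\numclasses{k}} Pr(E_i) \;\le\; \numclasses{k} \cdot \dfrac{\delta}{\numclasses{k}} \;=\; \delta,
\]
so the complementary event that $|\hat{p}_i - p_i| \le \epsilon/2$ holds \emph{simultaneously} for every $i \in [1,\numclasses{k}]$ occurs with probability at least $1-\delta$.

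Finally, I would unfold the definition of \approxindsubgraphs{k}{\tau}{\epsilon}{\delta} given in Section~\ref{sec:prel}: on this high-probability event, the collection extracted from \subgraphsample by the criterion $\relfreq{i} \ge \tau$ meets the required per-class frequency-estimation bound, and therefore constitutes an $(\epsilon,\delta)$-approximation to \freqindsubgraphs{k}{\tau}. The main obstacle is really only one of bookkeeping rather than probability: there is no additional probabilistic content beyond the union bound, because the factor \numclasses{k} inside the logarithm of Eq.~(\ref{eq:sampleLowerBound}) was engineered in Lemma~\ref{lemma:sampSize} precisely so that multiplying by \numclasses{k} in Boole's inequality brings the aggregate failure probability down to exactly~$\delta$.
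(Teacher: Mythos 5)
Your proposal is correct and follows essentially the same route as the paper: invoke the per-class guarantee of Lemma~\ref{lemma:sampSize}, take a union bound over the $\numclasses{k}$ isomorphism classes to get the simultaneous event with probability at least $1-\delta$, and then read off the approximation guarantee from the definition. The only difference is that the paper makes the final ``bookkeeping'' step explicit — showing that on the good event every class with $p_i \ge \tau$ has $\hat{p}_i \ge \tau - \epsilon/2$ and every class with $p_i < \tau - \epsilon$ has $\hat{p}_i < \tau - \epsilon/2$, so the sample-based collection thresholded at $\tau - \epsilon/2$ has no false negatives and no badly-infrequent false positives — whereas you fold this into ``unfolding the definition''; spelling it out would make the argument fully self-contained.
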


\begin{proof}
Given $M$ that satisfies Eq.~(\ref{eq:sampleLowerBound}), using union bound over all $\numclasses{k}$ estimation failure scenarios, 
we have $\vert \hat{p_i}  - p_i| \le \epsilon / 2 $, for all $i \in [1,\numclasses{k}]$, with probability at least $1-\delta$.
Then, there should be no $i \in [1, \numclasses{k}]$
with $p_i \ge \tau$, 
for which $\hat{p}_i < \tau - \epsilon / 2$. 
Hence, we ensure $\tilde{F}(\sample, \tau - \epsilon / 2) \subseteq F(C^k, \tau)$ with probability at least $1-\delta$.
Now, assume that there is a subgraph 
$\subgraph  \in C^k_i$ such that $p_i < \tau - \epsilon$.
We have that $\hat{p}_i < \tau - \epsilon / 2$, 
hence, there is no subgraph $\subgraph$ such that 
$\subgraph \not\in F(C^k, \tau)$ and $\subgraph \in \tilde{F}(\sample, \tau - \epsilon / 2)$, 
with probability at least~$1-\delta$.
\end{proof}

\section{Neighborhood Exploration}

The skip optimizations allows us to efficiently maintain the uniformity of the sample $\sample$ by eliminating the need to test the inclusion of each newly created $k$-subgraph in the local neighborhood of the inserted edge.
However, the skip optimizations require to know the number $\mathcal{W}$ of new $k$-subgraphs. 
Unfortunately, exact computation of $\mathcal{W}$ requires costly traversal of the neighborhood of the inserted edge.
Moreover, for dynamic streams, the value of the skip counter directly depends on $c_1$ and $c_2$, which require to compute the number $\mathcal{D}$ of deleted induced subgraphs after each edge deletion operation. Thus, we resort on efficient methods to approximate the values of $\mathcal{W}$ and $\mathcal{D}$. 

%\begin{algorithm}[tb]
%\footnotesize
%\caption{Compute-$\mathcal{D}$ ($k \ge 3$)}
%\label{alg:computeD}
%\begin{algorithmic}[1]
%\Procedure{Compute-$\mathcal{D}$}{$t, (u,v)$}
%\State $\mathcal{D} \gets 0$
%\For{\texttt{$h \in [0,k-2]$}}
%	\State $j \gets k - 2 - h$
%	\State $V_h \gets N^{t}_{u,h} \setminus N^{t}_{v,j} $
%	\State $V_j \gets N^{t}_{v,j}  \setminus N^{t}_{u,h} $
%	\State $x	\gets \vert\{G_S = (S, E(S)) : u \in S, |S| = h+1, S \subseteq V_h, E(S) \subseteq E(V_h)\}\vert $
%	\State $y	\gets \vert\{G_S = (S, E(S)) : v \in S, |S| = j+1, S \subseteq V_j, E(S) \subseteq E(V_j)\}\vert $
%	\State $\mathcal{D} \gets \mathcal{D} + x \cdot y$
%\EndFor
%\EndProcedure
%\end{algorithmic}
%\end{algorithm}

%\spara{Approximate computation of $\mathcal{W}$ and $\mathcal{D}$.}
To efficiently approximate the value of $\mathcal{W}$ after an edge $(u,v)$ is inserted at time $t$, we use sketches to estimate $\vert \neighborst{u,h}{t-1} \cap \neighborst{v,j}{t-1} \vert$ for all possible values of $h \in [0,k-2]$ and $j \in [0,k-2]$.
Similarly, to efficiently approximate $\mathcal{D}$  after an edge $(u,v)$ is deleted at time $t$, we use sketches to estimate $\vert \neighborst{u,h}{t} \cap \neighborst{v,j}{t} \vert$ for all possible values of $h \in [0,k-2]$ and $j \in [0,k-2]$.

Any sketching technique for set-size estimation can be used.
For our purpose, we choose to use the bottom-$k$ sketch~\citep{cohen2007summarizing} in conjunction with recently-proposed improved estimators for union and intersections of sketches~\citep{ting2016towards}.
A bottom-$k$ sketch uses a hash function $h(\cdot)$ to map elements of a universe into real numbers in $[0,1]$, and stores the $k$ minimum values in a set.
The smaller the $k$-th stored value is, the larger the size of the original set should be;
a simple estimate of the size is given by 
$
\frac{k-1}{\gamma},
$
where $\gamma$ is the largest stored hash value.

In our case, the universe of elements is the set of vertices \verticest{t} that belong to the graph at time $t$.
We build a sketch for each vertex $v \in \verticest{t}$ that summarizes \neighborst{v}{t}.
These sketches can be efficiently combined to create a sketch for the union of the neighbors of a given vertex while exploring the neighborhood via a breadth first search (BFS).

Bottom-$k$ sketches can easily be built incrementally.
When a new edge $(u,v)$ is added, we simply add the hash value of $v$ to the sketch of $u$ if it is smaller than the current maximum, and vice versa.
Alas, bottom-$k$ sketches do not directly support deletions.
However, traditionally the sketches are used in a streaming setting where memory is the main concern.
In our case, the universe of elements already resides in memory (i.e., the vertices of the graph), and our goal is to improve the speed of computation of Algorithm~\ref{alg:computeW} and its counterpart for deletion.
Therefore, we can easily store the global hash value of each vertex to be used for sketching.
Then, we can implement the sketch by using a pair of min-heap/max-heap.
The max-heap $A^+$ has bounded size and contains the hash values of the corresponding bottom-$k$ vertices.
The min-heap $A^-$ contains the hash values of the rest of the neighborhood.
Whenever an edge $(u,v)$ is deleted, if $h(v) \in A^-$ we remove the value from $A^-$ but the sketch remains unchanged; if $h(v) \in A^+$ we remove the value from $A^+$, and we also transfer the minimum value from $A^-$ to $A^+$ to maintain the fixed size of the sketch.

\subsection{Efficient implementation of \sample}
The reservoir sample \sample needs to support two main access operations efficiently: (1) Random access (to replace subgraphs in the sample, for reservoir sampling); (2) Access by vertex id (to identify modified subgraphs, as in Algorithm~\ref{alg:insertionSkipOptim}).
%\begin{squishlist}
%\item Random access (to replace subgraphs in the sample, for reservoir sampling);
%\item Access by vertex id (to identify modified subgraphs, as in Algorithm~\ref{alg:insertionSkipOptim}).
%\end{squishlist}

In order to support both operations in constant time, we resort to an array for the basic random access, supplemented by hash-based indexes for the access by vertex id.

The basic array is straightforward to implement, as the size of the sample $M$ is fixed, and the size of its element is constant $k^2$ (to store both vertices and edges).
On top of this basic array, we maintain and index $\idx : \vertices \rightarrow \left\{ S \subset \vertices \right\}$
such that $v \rightarrow S$ for all $v \in S$ and all $S \in \sample$.
That is, we have a pointer from each vertex part of a subgraph in the sample, to the set of subgraphs containing it.
Therefore, when an edge $(u,v)$ is modified at time $t$ (either added or deleted), retrieving the set of potentially affected subgraphs takes constant time.
For each potentially affected subgraph, checking whether it is actually affected also takes constant time: for a subgraph $S \in \idx(u)$ (respectively, $S \in \idx(v))$ 
we simply need to check whether $v \in S$ (respectively, $u \in S$).
If so, the subgraph needs to be updated, and so the corresponding counters for its pattern.

%%%% prev. version keep for now %%%%
%We want to perform neighborhood exploration in constant time. The skip optimization allows to perform the sampling part efficiently,
%however, we still need to compute the size of the symmetric set difference between the two neighborhoods to obtain the size of the sample space, i.e., the number of new subgraphs $\mathcal{W}$ formed by the newly arrived edge $(u,v)$
%
%\[
%\mathcal{W} = | N_{u} \bigtriangleup N_{v} | = | N_u \cup N_v \setminus N_u \cap N_v | .
%\]
%
%By the inclusion / exclusion principle,
%
%\[
%| N_u \cap N_v | = |N_u| + |N_v| - | N_u \cup N_v | ,
%\]
%and therefore, 
%\[
%\mathcal{W} = 2 \cdot | N_u \cup N_v | - |N_u| - |N_v| .
%\]
%
%We can easily estimate the size of the union by using the union of sketches for set size estimation (e.g., HyperLogLog or CountMin).
%We keep one such sketch for each vertex $v$ which contains the set of neighbors $N_v$.
%When an edge $(u,v)$ arrives, we can simply merge the two sketches to obtain a sketch for the union of the two sets.
%We can even use more sophisticated techniques for the estimation of the union and intersection of sets that have been recently developed~\citep{ting2016towards}.

\subsection{Time complexity}
Our proposed algorithms contain two components: an exploration procedure and a reservoir of samples. 
The addition of an edge $(u,v) \not\in \edgest{t-1}$ at time $t$ affects  only the subgraphs in the local neighborhoods up to $\neighborst{u,h}{t-1}$ and $\neighborst{v,j}{t-1}$, 
where $h+j=k-2$. The base algorithms, for both incremental and fully dynamic settings, iterate through the set of subgraphs in the local neighborhoods up to $\neighborst{u,h}{t-1}$ and $\neighborst{v,j}{t-1}$.
Moreover, the subgraphs are added into the reservoir in constant time, i.e., $\bigO (1)$ per subgraph, which implies that the running time of the algorithms are propotional to the expensive exploration procedure, i.e., $\bigO (\neighborst{u,h}{t-1}\cup \neighborst{v,j}{t-1})$.
The skip optimization improves the execution time by avoiding materializing and computing the expensive DFS code for many subgraphs,
but does not change its worst case upper bound. % in the local neighborhood of the edge $(u,v)$.

%Our proposed algorithms contain two components: an exploration procedure and a reservoir of samples. 
%The addition of an edge $(u,v) \not\in \edgest{t-1}$ at time $t$ affects  only the subgraphs in the local neighborhoods up to $\neighborst{u,h}{t-1}$ and $\neighborst{v,j}{t-1}$, 
%where $h+j=k-2$, i.e., all the connected $k$-subgraphs that contain $u$, $v$, 
%and $h+j$ additional nodes from their neighborhoods, for all admissible values of $h,j \ge 0$.
%The base algorithms, for both incremental and fully dynamic settings, iterate through the set of subgraphs in the local neighborhoods up to $\neighborst{u,h}{t-1}$ and $\neighborst{v,j}{t-1}$.
%Moreover, the subgraphs are added into the reservoir in constant time, i.e., $\bigO (1)$ per subgraph, which implies that the running time of the algorithms are propotional to the expensive exploration procedure, i.e., $\bigO (\neighborst{u,h}{t-1}\cup \neighborst{v,j}{t-1})$.
%The skip optimization improves the execution time by avoiding materializing and computing the expensive DFS code for many subgraphs,
%but does not change its worst case upper bound. % in the local neighborhood of the edge $(u,v)$.

% !TEX root = ../main.tex
\section{Experiments}
\label{sec:experiments}

We conduct an extensive empirical evaluation of the proposed algorithms, and provide a comparison with the existing solutions.
In particular, we answer the following interesting questions:
\begin{squishlist}
\item[\textbf{Q1:}]
What is the quality of frequent patterns for incremental streams?
\item[\textbf{Q2:}]
What is the quality of frequent patterns for dynamic streams?
\item[\textbf{Q3:}]
What is the performance in terms of average update time?
\end{squishlist}

\subsection{Experimental setup}
\spara{Datasets.}
Table~\ref{tab:summary-datasets} shows the graphs used as input in our experiments.
All datasets used are publicly available. 
Patent (PT)~\cite{hall2001nber} contains citations among US Patents from January 1963 to December 1999; 
the label of a patent is the year it was granted.
YouTube (YT)~\cite{cheng2008statistics} lists crawled videos and their related videos posted from February 2007 to July 2008. 
The label is a combination of a video's rating and length.
The streams are generated by permuting the edges in a random order.

\begin{table}[t]
\centering
\caption{Datasets used in the experiments.}
\small
\begin{tabular}{l l r r r}
\toprule
Dataset		&	Symbol	&	$| \vertices |$	&	$| \edges |$ & $| \labels |$	\\ 
\midrule
Patents 		& PT		& \num{3}M	& \num{14}M & 37 \\
Youtube		& YT		& \num{4.6}M	&	\num{43}M & 108 \\ 
\bottomrule
\end{tabular}
\label{tab:summary-datasets}
\vspace{-\baselineskip}
\end{table}

\spara{Metrics.}
We use the following metrics to evaluate the quality of all the algorithms: 
\begin{squishlist}
\item \emph{Average Relative Error (RE):}
measures how close the estimation of the frequency of the subgraph patterns compared to the ground truth. 
For the set of patterns $\freqindsubgraphs{k}{\tau}$, the average RE of the estimation is defined as $\frac{1}{\numclasses{k}} \sum_{i=1}^{\numclasses{k}} \frac{\left\vert \hat{p}_i - p_i \right\vert}{p_i}$.
\item \emph{Precision:} 
measures the fraction of frequent subgraph patterns among the ones returned by the algorithm.
\item \emph{Recall:}
measures the fraction of frequent subgraph patterns returned by the algorithm over all frequent subgraphs (as computed by the exact algorithm).
\end{squishlist} 
Additionally, we evaluate the efficiency of the algorithms by reporting the average update time. %, i.e., the average time it takes to process a single edge update.
We provide an extensive comparison of all the algorithms for $k=3$. %, while provide the results for our algorithm for $k=4$, as other algorithms do not scale well for greater values of $k$. 
We report the results of experiments averaged over $5$ runs.

\spara{Algorithms.}
%Table \ref{tab:summary-algorithm} shows the notations used for different algorithms.
We use two baselines.
Exact counting (\ec) performs exhaustive exploration of the neighborhood of the updated edge, and counts all possible subgraph patterns. 
Edge reservoir (\er) is a scheme inspired by~\citet{stefani2017triest}, which maintains a reservoir of edges during the dynamic edge updates. 
The edge reservoir is used to estimate the frequency of subgraph patterns by applying the appropriate correcting factor for the sampling probability of each pattern.
We compare these baselines with our proposed algorithms, subgraph reservoir (\sr) and its optimized version (\osr).
%We evaluate all the algorithms in both incremental and fully dynamic settings.
The size of the subgraphs reservoir is set as in Section~\ref{sec:sample-size}.
Unless otherwise specified, we fix $\epsilon=0.01$ and $\delta=0.1$.
To have a fair comparison with \er, following the evaluation of \citet{stefani2017triest}, we set the size of edge reservoir as the maximum number of edges used in the subgraph reservoir, averaged over 5 runs.
Note that \ec and \er algorithms are more competitive than any offline algorithm, e.g., \textsc{GraMi}~\citep{elseidy2014grami}, which require processing the whole graph upon any update.
\ec takes less than $2 \times 10^{-5}$ seconds to process an edge of the PT dataset, on average, while one execution of \textsc{GraMi} on the same dataset takes around $30$ seconds, which is several orders of magnitude larger, and we need to execute it once per edge.

%\begin{table}[t]
%\caption{Notation for the top-$k$ algorithms.}
%%\vspace{-2mm}
%\centering
%\small
%\begin{tabular}{l l c c  }
%\toprule
%Symbol & Reference  \\
%\midrule
%%Hashing & H \\
%Exact Counting & \ec  \\
%Edge Reservoir 		& \er	   \\
%Subgraph Reservoir & \sr  \\
%Optimized Subgraph Reservoir & \osr \\
%\bottomrule
%\end{tabular}
%\label{tab:summary-algorithm}
%\vspace{-\baselineskip}
%\end{table}

\begin{figure*}[t]
	\includegraphics[width=0.8\textwidth]{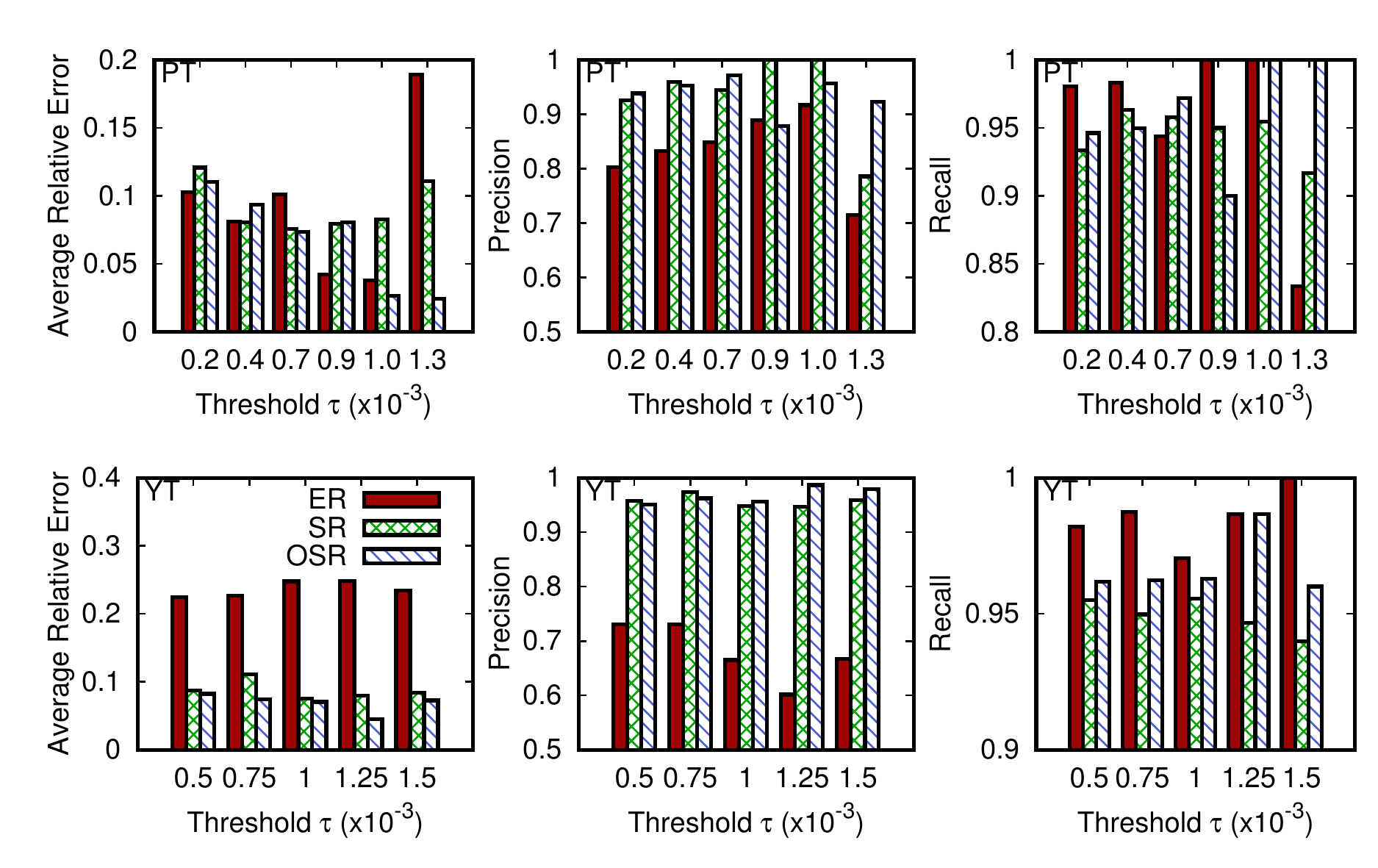}
	\caption{Relative error, precision, and recall for incremental streams on PT and YT datasets, for different values of threshold $\tau$.}
	\label{fig:incremental-quality}
\end{figure*}

\spara{Experimental environment.}
We conduct our experiments on a machine with 2 Intel Xeon Processors E5-2698 and 128GiB of memory. 
All the algorithms are implemented in Java and executed on JRE 7 running on Linux.
The source code is available online.\footnote{\url{https://github.com/anisnasir/frequent-patterns}}

\subsection{Incremental case}
We first evaluate our proposed algorithm on incremental streams.
Starting from an empty graph, we add one edge per timestamp, for both the PT and YT datasets,
and run the algorithms for several values of the frequency threshold $\threshold$.
%In particular, we show that our algorithm generates high quality results, in terms of average estimation error, precision and recall. 
%Note that the insertion-only stream does not leverage sliding windows. 
%In this experiment, we report the results for subgraph of size $k=3$.

Figure~\ref{fig:incremental-quality} shows the results.
For the PT dataset, the three algorithms behave similarly in terms of RE.
The subgraph versions offer slightly higher precision at the expense of decrease in recall.
However, for the highest frequency threshold, we see a marked deterioration of the performance of \er.
This behavior is a result of higher variance in \er due to non-uniform subgraph-sampling probabilities.
Conversely, for YT, both versions of the subgraph reservoir algorithm provide superior results in terms of average relative error.
Considering YT is the larger and more challenging dataset (in terms of number of labels), this result shows the power of subgraph sampling.
The improved estimation performance translates to much higher precision for \sr and \osr compared to \er.
The recall of all the algorithms are very similar. %, within \num{0.05} differenof each other in most cases.
Overall, the results indicate that \er generates a larger number of false positives in the result set, while \sr and \osr are able to avoid such errors while at the same time still having a low false-negative rate.

\subsection{Fully-dynamic case}
Now, we proceed to evaluate the algorithms for fully-dynamic streams. 
To produce edge deletions, we execute the algorithms in a sliding window model.
This model is of practical interest as it allows to observe recent trends in the stream.
%The number of edges in the sliding window is an input parameter.
We evaluate the algorithms for the YT dataset, and use a sliding window of size 10M.
We choose a sliding window large enough so so that the number of edges (subgraphs) do not fit in the edge (subgraph) reservoir, otherwise both algorithms are equivalent to exact counting.
We only report the results for YT dataset, as the result for the PT dataset are similar to the incremental case.

Figure~\ref{fig:fd-quality} contains the results for YT dataset.
\er obtains higher relative error compared to \sr, and poor precision and recall.
\sr is clearly the best performing algorithm in terms of accuracy, however, as we show next, it pays in terms of efficiency.
\osr has consistently better accuracy than \er, although the approximations it deploys introduce some errors.
This effect is more evident for larger frequency thresholds, where the precision drops noticeably.

\begin{figure*}[t]
	\includegraphics[width=0.8\textwidth]{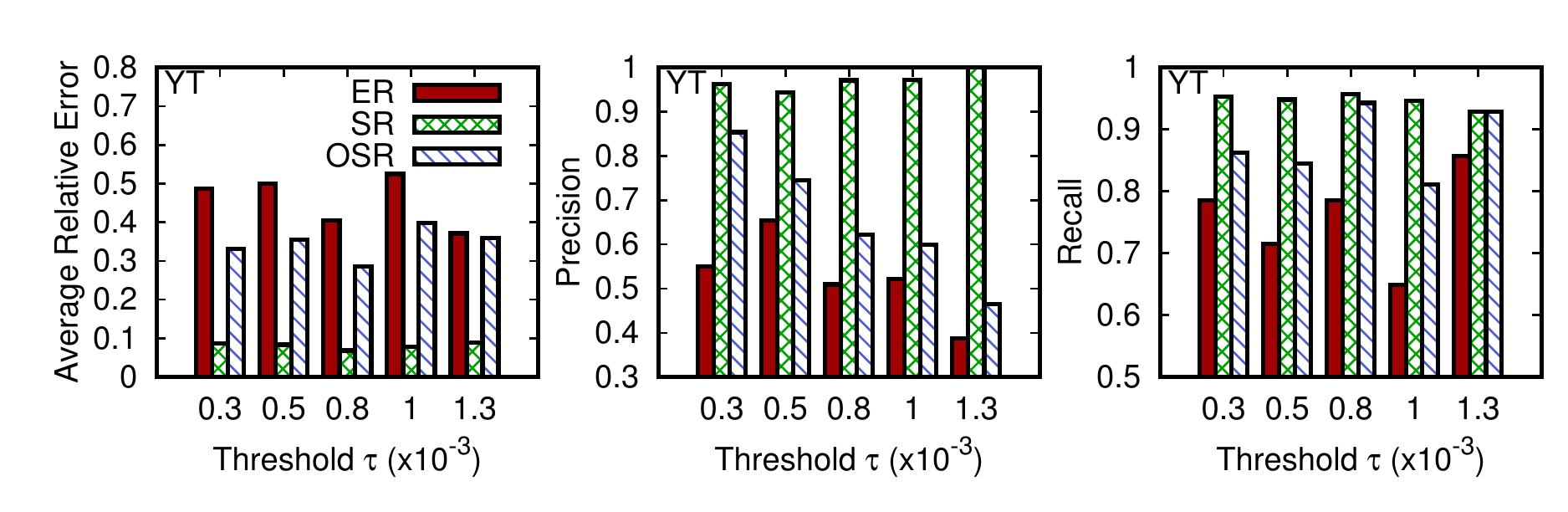}
	\caption{Relative error, precision, and recall for fully-dynamic stream on YT dataset, for different values of threshold $\tau$.}
	\label{fig:fd-quality}
\end{figure*}

\subsection{Performance}
Lastly, we evaluate the algorithms in terms of the average update time for both incremental and fully-dynamic streams on PT and YT datasets.
The size of the sliding window is 10M for the fully-dynamic streams. 
Figure~\ref{fig:performance} reports the results of the experiments which show that both \sr and \osr provide significant performance gains compared to the \ec while they are both outperformed by \er. However, given the superior accuracy of \sr and \osr compared to \er, it can be easily observed that \osr provides a good trade-off between accuracy and efficiency. 

%Furthermore, the average update times of \osr are closer to the ones of \er. We believe that this additional cost is justified compared to the gain in the quality of the \osr. 

\begin{figure}[t]
\begin{center}
	\includegraphics[width=\columnwidth]{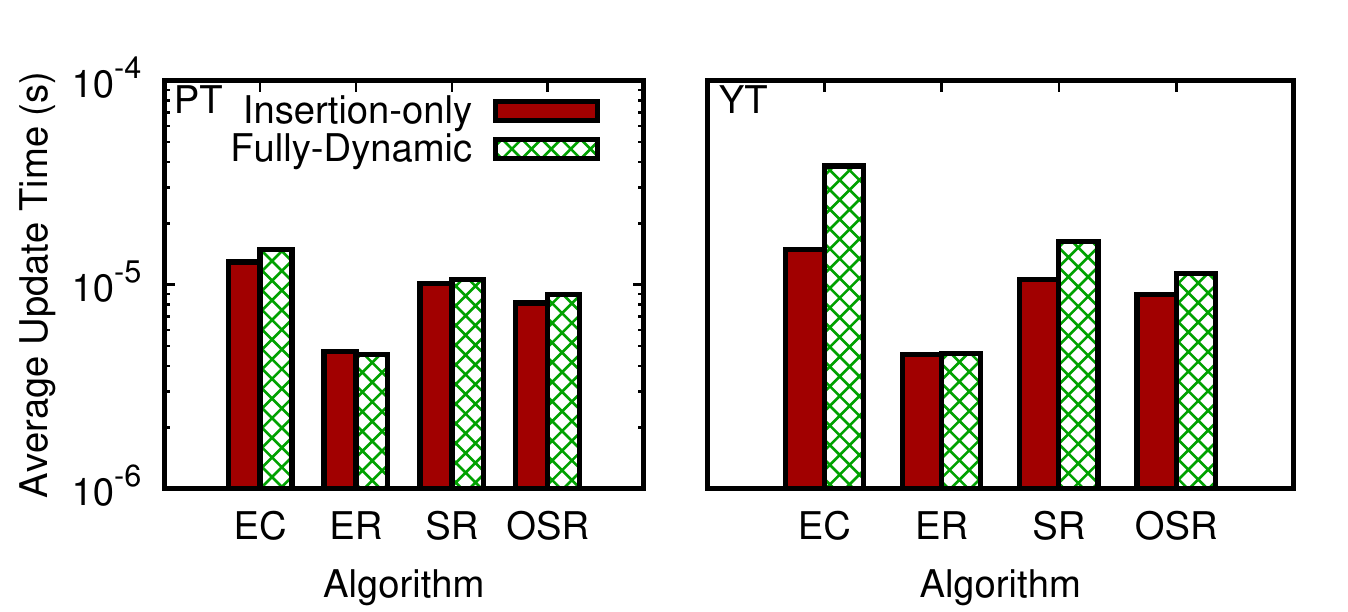}
	\caption{Average update time for incremental and fully-dynamic streams on PT and YT datasets.}
	\label{fig:performance}
\end{center}
\end{figure}

% !TEX root = ../main.tex
\section{Related Work}
\label{sec:related}

\spara{Triangle counting.} Exact and approximate triangle counting in static graphs has attracted a great deal of attention. We refer the reader to the survey by \citet{latapy2008main} for a comprehensive treatment of the topic, and include only related work on approximate triangle counting in a streaming setting. \citet{tsourakakis2011triangle} proposed triangle sparsifiers to approximate the triangle counts with a single pass of the graph, hence, the technique can also be applied to incremental streams. 
\citet{pavan2013counting} and \citet{jha2015space} proposed sampling a set of connected paths of length for approximately counting the triangles in incremental streams. %Ahmed et al.~\cite{ahmed2014graph} propose a sample-and-hold approach for insertion only streams and Bulteau et al.~\cite{bulteau2016triangle} proposed a single pass algorithm for fully dynamic streams however in both settings the count estimates could only be obtained at the end of the stream. 
\citet{lim2015mascot} proposed an algorithm based on Bernoulli sampling of edges for incremental streams, in which the edges are kept in the sample with a fixed user-defined probability. %, while the estimations are instantly updated whenever the sample includes new edges. 
Recently, \citet{stefani2017triest} proposed an algorithm for fully-dynamic streams via reservoir sampling \cite{vitter1985random} and random pairing \cite{gemulla2006dip}. 
%based on maintaing a uniform sample of edges for fully dynamic streams in which the estimations are instantly updated whenever an edge is added to or deleted from the sample.

\spara{General $k$-vertex graphlet counting.} Approximate counting of 3-, 4-, 5-vertex graphlets in static graphs has received much more attention than exact counting, which has an exponential cost. 
%Counting the exact number of graphlets has high computation cost since the number of possible k-vertex graphlets grows in $O(n^k)$. 
%Hence, approximate counting of 3-, 4-, 5-vertex graphlets in static graphs has attracted more attention than exact counting of general $k$-vertex graphlets. 
Most of the literature on approximate counting of graphlets uses random-walks to collect a uniform sample of graphlets on static graphs~\cite{bhuiyan2012guise, wang2014efficiently, han2016waddling, chen2016general}.
Alternatively, \citet{bressan2017counting} proposed a color coding based scheme for estimating $k$-vertex graphlet statistics.
Unlike the static case, approximating graphlet statistics in a streaming setting has received much less attention, and the literature is limited to incremental streams for $k > 3$. 
Wang et al.~\citep{wang2016minfer} are the first to propose an algorithm that estimates graphlet statistics from a uniform sample of edges in incremental streams. 
A recent work by~\citet{chenunified} examines approximate counting of graphlets in incremental streams for different choice of edge sampling and probabilistic counting methods. 

\spara{Transactional FSM.}
\citet{inokuchi2000apriori} introduced the problem of FSM in the transactional setting, where the goal is to mine all the frequent subgraphs on a given dataset of many, usually small, graphs.
Following \cite{inokuchi2000apriori}, a good number of algorithms for this task were provided~\citep{kuramochi2001frequent, yan2002gspan, huan2003efficient}. The transactional FSM setting is similar to the one considered by frequent-itemset mining~\cite{han2011data}, allowing to reuse many existing results, thanks to the anti-monotonicity of its support metric. %can be reused thanks to the anti-monotonicity of the support metric (i.e., count of isomorphisms).
In addition to the exact mining approaches, a line of work has studied the approximate mining of frequent subgraphs by MCMC sampling from the space of graph patterns~\cite{al2009output, saha2015fs3} with efficient pruning strategies based on anti-monotonicity of the support metric. For a comprehensive treatment, see the survey by~\citet{jiang2013survey}. 
%which do not scale to large graph datasets~\cite{chaoji2008origami}, a line of work has studied the approximate mining of frequent subgraphs by MCMC sampling from the space of graph patterns~\cite{al2009output, saha2015fs3} with efficient pruning strategies based on anti-monotonicity of the support metric. For a comprehensive treatment, see the survey by~\citet{jiang2013survey}. 

%%%%%%%%%%%
\spara{Single-Graph FSM.} 
%Unlike the transactional setting, the usual support metric obtained from the count of isomorphisms is no longer anti-monotone when mining frequent subgraphs from a given single large graph hence the literature on single-graph FSM defines several anti-monotone support metrics such as maximum independent sets (MIS)~\cite{kuramochi2005finding} or minimum image based (MNI)~\cite{bringmann2008frequent} for efficiently pruning the exhaustive search space. 
\citet{kuramochi2005finding} proposed an algorithm for exact mining of all frequent subgraphs in a given static graph that enumerates all the isomorphisms of the given graph and relies on the maximum-independent set (MIS) metric whose computation is NP-Complete. \citet{elseidy2014grami} proposed an apriori-like algorithm for exact mining of all frequent subgraphs based on the MIS metric from a given static graph. Apart from the exact mining algorithms, a line of work focused on approximate mining of frequent subgraphs in a given static graph. 
\citet{kuramochi2004grew} proposed a heuristic approach that prunes largely the search space however discovers only a small subset of frequent subgraphs without provable guarantees.
\citet{chen2007gapprox} uses an approximate version of the MIS metric, allowing approximate matches during the pruning. 
\citet{khan2010towards} propose proximity patterns, which, by relaxing the connectivity constraint of subgraphs, identify frequent patterns that cannot be found by other approaches. 

\smallskip
While the discussed work for solving FSM problem on a static graph are promising, none of them are applicable to streaming graphs.
The closest to our setting is the work by~\citet{ray2014frequent} which consider a single graph with continuous updates, however their approach is a simple heuristic applicable only to incremental streams and without provable guarantees. 
Likewise, \citet{abdelhamid2017incremental} consider an analogous setting, and propose an exact algorithm which borrows from the literature on incremental pattern mining.
The algorithm keeps track of ``fringe'' subgraph patterns, which are around the frequency threshold, and all their possible expansions/contractions (by adding/removing one edge).
While the algorithm uses clever indexing heuristics to reduce the runtime, an exact algorithm still needs to enumerate and track an exponential number of candidate subgraphs.
Finally, \citet{borgwardt2006pattern} look at the problem of finding dynamic patterns in graphs, i.e., patters over a graph time series, where persistence in time is a key element of the pattern.
By transforming the time series of a labeled edge into a binary string, the authors are able to leverage suffix trees and string-manipulation algorithms to find common substrings in the graph.
While dynamic graph patterns capture the time-series nature of the evolving graph, in our streaming scenario, only the latest instance of the graph is of interest, and the graph patterns found are comparable to the ones found for static graphs.

%\cite{teixeira2015arabesque} is one of the graph processing systems, which enables exhausting exploration for frequent subgraph mining.
%\spara{Reservoir Sampling for Dynamic Streams.}
%\citet{vitter1985random} presented a detailed analysis of the reservoir sampling scheme and discussed methods to speed up the algorithm by reducing the number of calls to the random number generator. Gemulla et al.\cite{gemulla2006dip} proposed Random Pairing as an extension of reservoir sampling for handling fully-dynamic streams. \citet{cohen2012don} generalizes and extends the Random Pairing approach to the case where the elements on the stream are
%key-value pairs, where the value may be negative (and less than −1). In our setting, where the value is not less than −1 (for an edge deletion), these generalizations do not apply hence 
%the algorithm presented by Cohen et al. \citet{cohen2012don} reduces essentially to Random Pairing.
% !TEX root = ../main.tex
\section{Conclusion}
\label{sec:conclusion}
We initiated the study of approximate frequent-subgraph mining (FSM) in both incremental and fully-dynamic streaming settings, where the edges can be arbitrarily added or removed from the graph.
For each streaming setting, we proposed algorithms that can extract a high-quality approximation of the frequent $k$-vertex subgraph patterns, for a given threshold, at any given time instance, with high probability.
Our algorithms operate by maintaining a uniform sample of $k$-vertex subgraphs at any time instance, 
for which we provide theoretical guarantees.
We also proposed several optimizations to our algorithms that allow achieving high accuracy with improved execution time.
We showed empirically that the proposed algorithms generate high-quality results compared to natural baselines.

\spara{Acknowledgements.}
Cigdem Aslay and Aristides Gionis
are supported by three Academy of Finland projects 
(286211, 313927, and 317085), and 
the EC H2020 RIA project ``SoBigData'' (654024).

\balance
\bibliographystyle{ACM-Reference-Format}
\bibliography{biblio}

\end{document}